\documentclass[11pt,a4paper]{article}

\usepackage[margin=1.05in]{geometry}
\usepackage{amsmath,amssymb,amsthm}
\usepackage{mathrsfs}
\usepackage{xcolor}

\usepackage[colorlinks=true,linkcolor=blue,citecolor=blue,urlcolor=blue]{hyperref}

\newcommand{\vN}{von~Neumann}

\newcommand{\alg}[1]{\mathcal{#1}}
\newcommand{\Hilb}{\mathcal{H}}

\newcommand{\artanh}{\operatorname{artanh}}

\theoremstyle{plain}
\newtheorem{theorem}{Theorem}
\newtheorem{proposition}[theorem]{Proposition}
\newtheorem{lemma}[theorem]{Lemma}
\newtheorem{corollary}[theorem]{Corollary}
\theoremstyle{remark}
\newtheorem{remark}{Remark}

\title{The modular energy range of an interval\\
in chiral conformal field theory}

\author{
Samuel L.~Braunstein$^{1}$ \and Zhi-Wei Wang$^{2,1}$
}

\date{%
\small
$^{1}$Department of Computer Science, University of York, York YO10 5GH, United Kingdom\\
$^{2}$College of Physics, Jilin University, Changchun 130012, China\\[1ex]
Corresponding author: S.~L.~Braunstein,
\texttt{sam.braunstein@york.ac.uk}\\
{ORCID: S.~L.~Braunstein 0000-0003-4790-136X;
Z.-W.~Wang 0000-0003-0847-1029}\\[2ex]
\today}

\begin{document}

\maketitle

\begin{abstract}
We determine exactly the range of the modular Hamiltonian of an interval
in a chiral conformal field theory over states of bounded energy.  For an
interval of length $R$ in a theory of central charge $c$, and states of
mean energy at most $E$, the supremum of the modular energy is
\[
  M_{\max}(E,R)
  = \min_{\beta>0}\Big\{\! (1+\beta)\tfrac{\pi R}{2}E
   + \tfrac{c}{6}\big[1-\sqrt\beta\arctan\tfrac1{\sqrt\beta}\big]\!\Big\}
  = \frac{\pi R}{2}E + \frac{c}{6}
   - \frac{\pi c^{2}}{288\,RE} + O\big((RE)^{-2}\big) ,
\]
and neither the coefficient $\pi R/2$ nor the additive constant $c/6$ can
be improved.  Both are exact rather than merely bounds: the
one-parameter family of weights used to prove them is the Legendre
structure of the answer, and the energy of the optimizing state is
identically minus the derivative of the bound with respect to the
additive shift.  The infimum is logarithmic,
$M_{\min} \simeq -\tfrac{c}{12}\big[\ln(24\pi RE/c)-1\big]$, so that the
range is linear above and logarithmic below, the two meeting in the limit
of small $RE$.  It is strictly wider above than below, by more than
$\tfrac{2}{5}\pi RE$ at every energy, a statement that reduces after a
Stieltjes substitution to the positivity of a variance.

The two ingredients of the upper bound separate cleanly.
The coefficient $\pi R/2$ is a consequence of M\"obius representation
theory alone: the weight complementary to the modular one is a translate
of the special conformal generator and therefore positive, so for the
full modular Hamiltonian the bound holds with no additive constant.  The
constant $c/6$ is precisely the price of truncating that weight at the
endpoints of the interval, and it is supplied by the quantum energy
inequality of Fewster and Hollands.  The state that saturates the bound
places its negative energy exactly at the two endpoints, which is the
same statement seen from the other side.  Accordingly the constant
is additive over entangling points: for $n$ disjoint intervals of equal
length it is $nc/6$, or $c/12$ per point.

Composing the bound with positivity of relative entropy gives a
sharp Bekenstein bound, $\Delta S_I \leq \frac{\pi R}{2}E + \frac{c}{6}$,
which recovers the original form of that inequality from the modular form
that by itself imposes no constraint.
\end{abstract}

\noindent
\textbf{Keywords:} modular Hamiltonian; chiral conformal field theory;
quantum energy inequality; Tomita-Takesaki modular theory; M\"obius
covariance; entanglement.

\medskip
\noindent
\textbf{Mathematics Subject Classification (2020):}
81T40 (primary); 81T05, 46L60, 46L10.

\section{Introduction}
\label{sec:intro}

The modular Hamiltonian of a spacetime region has become a central object
in quantum field theory.  For a wedge it generates the boost, by the
theorem of Bisognano and Wichmann \cite{BisognanoWichmann1975}; for a
conformal theory it is known explicitly for balls and intervals
\cite{HislopLongo1982,CHM2011}; and it controls relative entropy, the
Bekenstein bound \cite{Casini2008,BlancoCasini2013}, and a growing body of
work on modular flow and its geometry.  Its relation to the ordinary
energy is therefore a natural question, and the algebraic side of the
subject answers a version of it exactly: by Borchers' theorem
\cite{Borchers1992} the modular group of a wedge acts on the translations
as a dilation,
$\mathrm{Ad}\,\Delta^{it}\,U(a) = U(e^{-2\pi t}a)$.

That is an identity between group actions.  Here we ask a quantitative
question instead.  \emph{How large can the modular energy of a region be,
on states whose ordinary energy is bounded?}  For a chiral conformal
field theory and an interval we answer it exactly, and the answer has a
structure worth stating before the details.

Write $I = (a,b)$ for an interval of length $R = b-a$ on the light ray,
$T$ for the chiral stress tensor, $P = \int T$ for the translation
generator, and
\begin{equation}
  K_I \;=\; 2\pi\int_a^b \frac{(x-a)(b-x)}{R}\,T(x)\,dx
  \label{eq:KIintro}
\end{equation}
for the modular weight of $I$.  The weight is bounded by $\pi R/2$, and
one is tempted to conclude that $K_I \leq \frac{\pi R}{2}P$ as an
operator inequality.  The inference is not available.  The difference is
a smearing of $T$ against a non-negative function, and such smearings are
not positive operators: local energy densities take negative expectation
values, by the theorem of Epstein, Glaser and Jaffe \cite{EGJ1965}.  Only
the total energy is protected by the spectrum condition.

Nevertheless the conclusion is correct up to an additive constant, and
the two ingredients enter separately.  For the \emph{full} modular
Hamiltonian, whose weight extends over the whole line, the bound holds
exactly and with no constant, because the complementary weight is a
perfect square and hence lies in the M\"obius positive cone of Blanco and
Casini \cite{BlancoCasini2013}: this is Proposition~\ref{prop:mobius}
below, and it accounts for the coefficient $\pi R/2$ without any
appeal to energy inequalities.  For the one-sided $K_I$ of
\eqref{eq:KIintro} the weight must be truncated at the endpoints of $I$,
and the price of that truncation is exactly $c/6$, supplied by the
quantum energy inequality of Fewster and Hollands
\cite{FewsterHollands2005}.  That the saturating state puts its negative
energy precisely at the two endpoints, where the modular weight vanishes,
is the same fact seen from the other side.

The results are Theorem~\ref{thm:main} for the supremum and
Theorem~\ref{thm:bottom} for the infimum.  Both are exact rather than
bounds: the one-parameter family of weights used to prove them is not a
technical device but the Legendre structure of the answer, as
Remark~\ref{rem:legendre} explains.  To our knowledge the operator
inequality \eqref{eq:endpoint} is new; Section~\ref{sec:disc} discusses
the nearest results.

Section~\ref{sec:setting} fixes the setting.
Section~\ref{sec:mobius} proves the M\"obius bound for the full modular
Hamiltonian, and Section~\ref{sec:qei} assembles the quantum energy
inequality in the form needed, including a lemma on admissible weights.
Section~\ref{sec:optimizer} isolates an identity for the energy of the
optimizing state which makes both ranges exact rather than bounded.
Sections \ref{sec:top} and \ref{sec:bottom} compute the top and the
bottom, Section~\ref{sec:multi} treats several intervals, and
Section~\ref{sec:disc} discusses the relation to existing bounds.

\section{Chiral conformal nets}
\label{sec:setting}

We work with a M\"obius covariant local net on the light ray, in the
standard sense \cite{GabbianiFrohlich1993,KawahigashiLongo2004}.  Thus
$I \mapsto \alg{A}(I)$ assigns a \vN\ algebra on a Hilbert space $\Hilb$
to each bounded open interval, isotonically, with $\alg{A}(I_1)$ and
$\alg{A}(I_2)$ commuting for disjoint intervals; there is a unitary
positive-energy representation of the M\"obius group under which the net
is covariant; and the vacuum $\Omega$ is the unique invariant vector.  We
assume the theory has a stress tensor $T$ of central charge $c$, so that
the translation generator is
\begin{equation}
  P \;=\; \int_{\mathbb{R}} T(x)\,dx \;\geq\; 0 ,
  \qquad P\,\Omega = 0 ,
  \label{eq:P}
\end{equation}
positivity being the spectrum condition.  We write $\alg{D}$ for the
dense domain of vectors on which the smeared stress tensor is defined and
for which $\langle T(v)\rangle_\psi$ is smooth with
$\langle T(v)\rangle_\psi = O(v^{-4})$ as $|v| \to \infty$; this is the
domain of \cite{FewsterHollands2005}, the decay being their consequence
of axiom~B.3, the smoothness of $\langle\Theta(z)\rangle_\psi$ on the
circle, together with their Eq.~(3.38).  In particular
$\langle T(\cdot)\rangle_\psi \in L^1(\mathbb{R})$ for $\psi \in \alg{D}$,
and we write $\Lambda_\psi$ for its $L^1$ norm.

For an interval $I = (a,b)$ of length $R$ put
\begin{equation}
  w(x) \;=\; \frac{(x-a)(b-x)}{R} ,
  \qquad
  K_I \;=\; 2\pi \int_a^b w(x)\,T(x)\,dx ,
  \label{eq:KI}
\end{equation}
the modular weight obtained from Bisognano-Wichmann by conformal
transport \cite{BisognanoWichmann1975,HislopLongo1982}.  The weight
vanishes at the endpoints and attains $R/4$ at the midpoint, so
\begin{equation}
  0 \;\leq\; 2\pi w(x) \;\leq\; \frac{\pi R}{2} .
  \label{eq:wmax}
\end{equation}

\begin{remark}[Normalization]
\label{rem:normalization}
Modular Hamiltonians are fixed by the flow they generate and hence only
up to an additive constant, which in field theory is cutoff dependent.
Equation~\eqref{eq:KI} removes that freedom: $K_I$ is defined as an
explicit integral of the stress tensor, and since
$\langle T\rangle_\Omega = 0$ it satisfies
$\langle K_I\rangle_\Omega = 0$.  The constant $c/6$ below is therefore a
feature of the inequality and not a residue of a convention, and the two
additive constants should not be conflated.
\end{remark}

\begin{remark}[One-sided and full]
\label{rem:naming}
Equation \eqref{eq:KI} integrates over $I$ only, so $K_I$ is the
one-sided modular weight.  The Tomita-Takesaki generator of
$(\alg{A}(I),\Omega)$ is $-\log\Delta_I = \hat K_I = K_I - K_{I'}$, the
same weight extended over the whole line.  We treat both, and they behave
quite differently: Section~\ref{sec:mobius} bounds $\hat K_I$ with no
additive constant, and Section~\ref{sec:top} shows that $K_I$ costs
$c/6$.
\end{remark}

\section{The full modular Hamiltonian: a M\"obius bound}
\label{sec:mobius}

\begin{proposition}
\label{prop:mobius}
With $x_c = (a+b)/2$ and
$G_{x_c} = \int_{\mathbb{R}}(x-x_c)^2\,T(x)\,dx$,
\begin{equation}
  \hat K_I
  \;=\; \frac{2\pi}{R}\int_{\mathbb{R}}(x-a)(b-x)\,T(x)\,dx
  \;=\; \frac{\pi R}{2}\,P \;-\; \frac{2\pi}{R}\,G_{x_c} ,
  \label{eq:mobius}
\end{equation}
and since $G_{x_c} \geq 0$ we have $\hat K_I \leq \frac{\pi R}{2}P$, with
no additive constant and no appeal to any energy inequality.
\end{proposition}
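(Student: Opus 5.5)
The plan has two parts: verify the second equality in \eqref{eq:mobius} as a pointwise identity between weights, and show that $G_{x_c}$ is a positive operator; the first equality is just the definition of $\hat K_I$ from Remark~\ref{rem:naming}.

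\emph{The weight identity.} With $x_c=(a+b)/2$ and $R=b-a$ we have $x-a = (x-x_c)+R/2$ and $b-x = R/2-(x-x_c)$. Hence
\[
  (x-a)(b-x) \;=\; \frac{R^2}{4} - (x-x_c)^2 .
\]
Multiplying by $2\pi/R$ gives $\frac{\pi R}{2} - \frac{2\pi}{R}(x-x_c)^2$. Smearing $T$ against this weight over $\mathbb{R}$, and using linearity of the smeared stress tensor as a quadratic form on $\alg{D}$, yields $\frac{\pi R}{2}P - \frac{2\pi}{R}G_{x_c}$, with $P$ as in \eqref{eq:P}.

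\emph{Positivity of $G_{x_c}$.} The weight $x^2$ smears $T$ into the special conformal generator $K$ of the M\"obius representation, with $P=\int T$, $D=\int xT$, $K=\int x^2T$ up to the paper's sign and normalization conventions. Then $G_{x_c} = K - 2x_c D + x_c^2 P$, and this equals $U(\tau_{x_c}) K\, U(\tau_{x_c})^*$, the conjugate of $K$ by the unitary translation by $x_c$; this follows directly from the covariance $U(g)T(x)U(g)^* = T(x+x_c)$ under translations. So it suffices to show $K\ge 0$. For this I would use the inversion $x\mapsto -1/x$, a M\"obius transformation implemented by a unitary $U(\iota)$ in the positive-energy representation, which conjugates $P$ into $K$. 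Hence $K = U(\iota)PU(\iota)^*\ge 0$ by the spectrum condition. Equivalently, $L_0 = \tfrac12(P+K)$ and the positive-energy condition $L_0\ge 0$ in each conjugate frame yield $K\ge0$.

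\emph{The main obstacle} is the passage from formal integrals to operators. The expressions $\int x^2 T$ and $\int xT$ involve non-integrable weights, so I would work in expectation values on $\alg{D}$. The decay $\langle T(v)\rangle_\psi = O(v^{-4})$ stated in Section~\ref{sec:setting} makes $\int (x-x_c)^2\langle T\rangle_\psi$ absolutely convergent. I would then identify this integral with $\langle\psi, U(\tau_{x_c})KU(\tau_{x_c})^*\psi\rangle$, where $K$ is the self-adjoint generator of the special conformal one-parameter subgroup, whose positivity is a representation-theoretic fact. The inequality $\hat K_I\le \frac{\pi R}{2}P$ then holds in the sense of quadratic forms on $\alg{D}$, and no energy inequality enters.
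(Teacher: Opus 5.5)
Your proposal is correct and follows the paper's proof. It uses the same identity $(x-a)(b-x)=R^2/4-(x-x_c)^2$, and the same argument that $G_{x_c}\ge 0$ because it is a translate of the special conformal generator, which is unitarily equivalent to $P$. You also supply details the paper leaves implicit: the explicit inversion $x\mapsto -1/x$ in the connected M\"obius group (the paper instead cites the Blanco--Casini positive cone and the vanishing Schwarzian), and the convergence and quadratic-form domain remarks on $\alg{D}$.
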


\begin{proof}
The weights satisfy the identity
$(x-a)(b-x) = R^2/4 - (x-x_c)^2$.  The operator $G_{x_c}$ is a translate
of the generator of special conformal transformations, unitarily
equivalent to $P$ and therefore positive.  More generally, for
$\delta,\gamma \geq 0$ the quadratic $h(x) = \delta + \gamma(x-x_0)^2$ is
non-negative and $\int h\,T = \delta P + \gamma G_{x_0} \geq 0$, which is
the positive cone of \cite{BlancoCasini2013}.  M\"obius transformations
have vanishing Schwarzian, so no anomaly arises.
\end{proof}

\begin{remark}[Higher dimensions]
\label{rem:highd}
The same decomposition holds for a ball of radius $R$ in a conformal
field theory of any dimension.  The Casini-Huerta-Myers weight
\cite{CHM2011} satisfies
\[
  2\pi\,\frac{R^2-r^2}{2R} \;=\; \pi R \;-\; \frac{\pi r^2}{R} ,
\]
so that
\begin{equation}
  \hat K_{\text{ball}} \;=\; \pi R\,H \;-\; \frac{\pi}{R}\,G ,
  \qquad G = \int r^2\,T_{00} ,
  \label{eq:highd}
\end{equation}
with $G$ the time component of the generator of special conformal
transformations, the higher-dimensional counterpart of the $G_{x_c}$ of
Proposition~\ref{prop:mobius}.  Blanco and Casini state their positive cone in general
dimension, obtaining it by transforming the Hamiltonian with an inversion
composed with a reflection so as to remain in the connected conformal
group \cite{BlancoCasini2013}; hence $G \geq 0$ and
$\hat K_{\text{ball}} \leq \pi R H$ in every dimension.  We claim no
novelty for this, any more than for
Proposition~\ref{prop:mobius}: what is special to two dimensions is not
the leading coefficient but the additive constant, which requires the
sharp quantum energy inequality of Section~\ref{sec:qei}.
\end{remark}

Two consequences organize the rest of the paper.  First, the coefficient
$\pi R/2$ is representation theory rather than a gift of any energy
inequality, and $c_0 \geq \pi R/2$ is exactly the threshold at which the
complementary weight $c_0 - 2\pi w$ enters the positive cone.  This
answers in advance the question of why $\pi R/2$ and not some other
number.  Second, any additive constant appearing for the one-sided $K_I$
is precisely the price of truncating the quadratic weight at the
endpoints of $I$.

\section{Quantum energy inequalities and admissible weights}
\label{sec:qei}

The tool for the truncation is the quantum energy inequality of Fewster
and Hollands \cite{FewsterHollands2005}, their Theorem 4.1: for a
conformal field theory with a single component $T$ of stress-energy and
any non-negative $G \in \mathscr{S}(\mathbb{R})$,
\begin{equation}
  \int G(v)\,\langle T(v)\rangle_\psi \,dv
  \;\geq\; -\,\frac{c}{12\pi}
  \int \Big(\tfrac{d}{dv}\sqrt{G(v)}\Big)^2 dv ,
  \qquad \psi \in \alg{D} ,
  \label{eq:QEI}
\end{equation}
with the convention $\frac{d}{dv}\sqrt G = 0$ where $G$ vanishes.  Three
features of their statement are used below.

\begin{remark}[Prefactor]
\label{rem:prefactor}
The constant $c/12\pi$ is for a \emph{single} chiral component.  For
$T_{00}$, which carries both chiralities, it doubles; the same inequality
appears in \cite{BCLR2018} with $c/6\pi$, using
$T_{00} = T_{++} + T_{--}$.  The two conventions are easily conflated and
the difference would be invisible in the final constant.
\end{remark}

\begin{remark}[Vanishing weights]
\label{rem:vanishing}
The weight used below vanishes at one interior point.  No special
treatment is needed: Remark (a) following Theorem 4.1 of
\cite{FewsterHollands2005}, with their Corollary A.2, establishes that
$\sqrt G$ lies in the Sobolev space $W^1(\mathbb{R})$ for non-negative
Schwartz $G$, that the
convention above is the weak derivative, and that the right-hand side of
\eqref{eq:QEI} is finite even for weights that are not strictly positive.
\end{remark}

\begin{remark}[Operator form]
\label{rem:operator}
Their Remark (b) states that, $\alg{D}$ being a
core for any smeared energy density, \eqref{eq:QEI} holds as an operator
inequality by standard quadratic form arguments.  The results below may
therefore be read as operator inequalities.
\end{remark}

The weights we use are not Schwartz, being constant at infinity, and are
not smooth, having corners where the interval meets the constant.  The
following lemma closes both gaps.

\begin{lemma}[Admissible weights]
\label{lem:admissible}
Let $g : \mathbb{R} \to [0,\infty)$ be Lipschitz, equal to a constant
$c_0 > 0$ outside a bounded set, smooth apart from finitely many
corners, and bounded below by some $\delta > 0$.  Then \eqref{eq:QEI} holds for
$g$.  The conclusion persists for $\delta = 0$ when $g$ is the limit of a
family $g + \delta$ of such weights differing by an additive constant.
\end{lemma}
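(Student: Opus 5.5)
We reduce to the Schwartz case of \eqref{eq:QEI} in two stages: split off the constant at infinity using the spectrum condition, then mollify the corners. Write $g = c_0 + f$, where $f = g - c_0$ is Lipschitz with compact support. Then
\[
\int g\,\langle T\rangle_\psi = c_0\langle P\rangle_\psi + \int f\,\langle T\rangle_\psi ,
\]
and the first term is non-negative by \eqref{eq:P}. The right-hand side of \eqref{eq:QEI} for $g$ involves only $(\sqrt g)' = g'/(2\sqrt g)$, which is bounded (since $g \geq \delta$) and compactly supported, so it is finite.

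To reach Schwartz weights, approximate $g$ by
\[
G_{n,\epsilon} = \chi_n\,(g * \rho_\epsilon) ,
\]
where $\rho_\epsilon$ is a standard mollifier and $\chi_n(x) = \chi(x/n)$ with $\chi \in C_c^\infty$, $0 \le \chi \le 1$, and $\chi = 1$ on $[-1,1]$. These are non-negative Schwartz functions, so \eqref{eq:QEI} applies to each $G_{n,\epsilon}$.

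Pass to the limit in two steps.

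\emph{The left side.} Since $\langle T\rangle_\psi \in L^1$ for $\psi \in \alg{D}$ and $G_{n,\epsilon} \to g$ pointwise with $|G_{n,\epsilon}| \le \sup g$, dominated convergence gives
\[
\int G_{n,\epsilon}\langle T\rangle_\psi \to \int g\,\langle T\rangle_\psi .
\]

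\emph{The right side.} Write $\sqrt{G} = \sqrt{\chi_n}\,\sqrt{g_\epsilon}$ with $g_\epsilon = g * \rho_\epsilon$. Then
\[
(\sqrt G)' = (\sqrt{\chi_n})'\sqrt{g_\epsilon} + \sqrt{\chi_n}\,g_\epsilon'/(2\sqrt{g_\epsilon}) .
\]
The second term converges in $L^2$ to $g'/(2\sqrt g)$ as $\epsilon \to 0$. This uses $g_\epsilon \ge \delta$, together with $g_\epsilon' = g' * \rho_\epsilon \to g'$ in $L^2$ on the compact region where $g'$ lives; away from the corners convergence is even locally uniform.

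The first term is the main obstacle. Its $L^2$ norm is at most $\sup g \cdot \|(\sqrt{\chi_n})'\|_2^2$, and this scales as $n^{-1}\|(\sqrt\chi)'\|_2^2$. We must therefore choose $\chi$ with $\sqrt\chi$ smooth, e.g. $\chi = \eta^2$ with $\eta \in C_c^\infty$. Then the cutoff term contributes $O(1/n) \to 0$. This is the familiar fact that the QEI cost of a wide plateau vanishes, and it is exactly why constants at infinity cost nothing. The cross term vanishes in the limit by Cauchy--Schwarz. Taking $\epsilon \to 0$ and then $n \to \infty$ yields \eqref{eq:QEI} for $g$.

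For $\delta = 0$, apply the result to $g + \delta'$ for $\delta' > 0$. The left side is
\[
\int g\,\langle T\rangle_\psi + \delta'\langle P\rangle_\psi ,
\]
which tends to $\int g\,\langle T\rangle_\psi$ as $\delta' \to 0$. On the right, the integrand $(g')^2/(4(g+\delta'))$ increases monotonically to $(g')^2/(4g)$, with value $0$ where $g'=0$; where $g$ vanishes, a Lipschitz weight has $g'=0$ almost everywhere on that set. Monotone convergence then gives the inequality with $\int ((\sqrt g)')^2$. The bound is vacuous if this integral is infinite, and it is finite for the quadratic-type zeros used later (cf. Remark~\ref{rem:vanishing}).

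Finally, by Remark~\ref{rem:operator} the inequality extends to operator form.
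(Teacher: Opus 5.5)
Your proof is correct and follows essentially the same route as the paper's: mollify, multiply by a cutoff whose square root is smooth so that its cost on the right-hand side is $O(1/n)$, pass to the limit on the left using integrability of $\langle T\rangle_\psi$, and treat $\delta=0$ by monotone convergence in the additive shift. The differences are cosmetic. You use dominated convergence where the paper gives an explicit estimate, and Cauchy--Schwarz for a cross term that in fact vanishes identically once $n$ is large. Your opening split $g=c_0+f$ via the spectrum condition is never actually used.
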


\begin{proof}
\emph{Smoothness.}  Let $g_\epsilon = g * \eta_\epsilon$ with
$\eta_\epsilon$ a standard mollifier.  Then $g_\epsilon$ is smooth,
equals $c_0$ outside an $\epsilon$-neighborhood of that set,
satisfies
$g_\epsilon \geq \delta$, and $\|g_\epsilon - g\|_\infty \to 0$ since $g$
is Lipschitz.  As both are bounded below by $\delta$, the square root is
Lipschitz on their common range and $(\sqrt{g_\epsilon})'$ converges to
$(\sqrt g)'$ in $L^2$.

\emph{Decay.}  Let $\theta_L$ be smooth, equal to $1$ on
$[-L,L] \supset I$, supported in $[-2L,2L]$, with $\sqrt{\theta_L}$
smooth and $|(\sqrt{\theta_L})'| = O(1/L)$.  Then
$g_{\epsilon,L} = g_\epsilon\theta_L$ is smooth, non-negative and
compactly supported, hence Schwartz, and \eqref{eq:QEI} applies to it.

\emph{Limits.}  On the left,
\[
  \Big| \int (g_{\epsilon,L} - g)\,\langle T\rangle_\psi \Big|
  \;\leq\; \|g_\epsilon - g\|_\infty\,\Lambda_\psi
  \;+\; c_0\!\int_{|v|>L} \big|\langle T(v)\rangle_\psi\big|\,dv ,
\]
the first term vanishing as $\epsilon \to 0$ and the second as
$L \to \infty$ by the $O(v^{-4})$ decay.  Integrability of
$\langle T\rangle_\psi$ is essential here and not merely convenient: the
cutoff error equals $-c_0$ far from $I$ and is never small in supremum
norm, so it is controlled by the decay of the state rather than by the
closeness of the weights.  On the right, expanding
$\sqrt{g_{\epsilon,L}} = \sqrt{g_\epsilon}\,\sqrt{\theta_L}$ gives three
terms,
\[
  \Big(\big(\sqrt{g_\epsilon}\big)'\Big)^2\theta_L
  \;+\; 2\big(\sqrt{g_\epsilon}\big)'\sqrt{\theta_L}\,
    \sqrt{g_\epsilon}\,\big(\sqrt{\theta_L}\big)'
  \;+\; g_\epsilon\Big(\big(\sqrt{\theta_L}\big)'\Big)^2 ,
\]
and each simplifies exactly.  The middle term vanishes pointwise, since
$(\sqrt{\theta_L})'$ is supported where $g_\epsilon \equiv c_0$ and hence
$(\sqrt{g_\epsilon})' = 0$ there.  In the first, $\theta_L = 1$ wherever
$(\sqrt{g_\epsilon})' \neq 0$; in the third, $g_\epsilon = c_0$ wherever
$(\sqrt{\theta_L})' \neq 0$.  Hence
\[
  \int \Big(\tfrac{d}{dv}\sqrt{g_{\epsilon,L}}\Big)^2
  \;=\; \int \Big(\tfrac{d}{dv}\sqrt{g_\epsilon}\Big)^2
  \;+\; c_0\!\int \big((\sqrt{\theta_L})'\big)^2 ,
\]
with no error term, the second integral being $O(1/L)$.  Taking $\epsilon \to 0$ and
then $L \to \infty$ gives the claim.

For $\delta = 0$, let $g_\delta = g + \delta$.  The left side converges by
dominated convergence against $\Lambda_\psi$.  On the right,
$g_\delta' = g'$, so $\int (g')^2/4g_\delta$ increases monotonically to
$\int (g')^2/4g$ as $\delta \downarrow 0$, and by
Remark~\ref{rem:vanishing} the limit is finite and equals
$\int((\sqrt g)')^2$.
\end{proof}

\begin{remark}
The additive form of the last clause is not a convenience.  For a general
pointwise-decreasing family the derivatives need not converge to $g'$ at
all, and the monotonicity fails; for an additive shift only the
denominator moves.  The family used in Section~\ref{sec:top} is an
additive shift, globally and not merely on $I$.
\end{remark}

\section{The energy of the optimizer}
\label{sec:optimizer}

Both bounds below are proved by a one-parameter family of weights, and in
each case the family is an additive shift.  The following identity, which
holds for both and by the same two-line computation, is what turns the
bounds into equalities.

Throughout, $\psi_\lambda$ denotes the Fewster-Hollands optimizer for the
weight with parameter $\lambda$, characterized by
$\langle T\rangle_{\psi_\lambda} = \frac{c}{12\pi}(\sqrt g)''/\sqrt g$
and approached in the limit described in \cite{FewsterHollands2005}.

\begin{lemma}[Envelope identity]
\label{lem:envelope}
Let $\{g_\lambda\}$ be a family of weights admissible in the sense of
Lemma~\ref{lem:admissible} and differing by an additive constant,
$\partial_\lambda g_\lambda = 1$, each equal to a constant off one and the
same bounded interval.  Write
\begin{equation}
  B(\lambda) \;=\; \frac{c}{12\pi}\int_{\mathbb{R}}
  \Big(\big(\sqrt{g_\lambda}\big)'\Big)^2
  \label{eq:Blambda}
\end{equation}
for the corresponding Fewster-Hollands bound.  Then the optimizer
satisfies
\begin{equation}
  \big\langle P\big\rangle_{\psi_\lambda} \;=\; -\,B'(\lambda)
  \label{eq:envid}
\end{equation}
identically in $\lambda$.
\end{lemma}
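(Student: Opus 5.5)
The approach is to compute both sides explicitly in terms of the weight and check that they agree, which the paper calls a two-line computation. Since each $g_\lambda$ equals a constant off one fixed bounded interval, write $g_\lambda = g_0 + \lambda$ with $g_0$ fixed. Put $s_\lambda = \sqrt{g_\lambda}$, so that $s_\lambda' = g'/(2 s_\lambda)$ with $g' := g_0'$ independent of $\lambda$. Then
\[
  B(\lambda) \;=\; \frac{c}{12\pi}\int \frac{(g')^2}{4 g_\lambda}
  \;=\; \frac{c}{48\pi}\int \frac{(g')^2}{g_\lambda},
  \qquad
  B'(\lambda) \;=\; -\frac{c}{48\pi}\int \frac{(g')^2}{g_\lambda^2}.
\]
Differentiation under the integral is justified because $g'$ is bounded, is supported in the fixed bounded interval, and $g_\lambda \geq \delta > 0$ locally uniformly in $\lambda$. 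The shift structure of Lemma~\ref{lem:admissible} matters here: only the denominator depends on $\lambda$.

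For the left side, I use the characterization $\langle T\rangle_{\psi_\lambda} = \frac{c}{12\pi} s_\lambda''/s_\lambda$ and integrate over $\mathbb{R}$. Integrating by parts gives
\[
  \int \frac{s''}{s} \;=\; \Big[\frac{s'}{s}\Big] + \int \frac{(s')^2}{s^2}.
\]
The boundary term vanishes because $s' = 0$ off the bounded interval. Also $(s')^2/s^2 = (g')^2/(4 g_\lambda^2)$. Hence
\[
  \langle P\rangle_{\psi_\lambda} \;=\; \frac{c}{12\pi}\cdot\frac14\int\frac{(g')^2}{g_\lambda^2} \;=\; -B'(\lambda),
\]
which is \eqref{eq:envid}.

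The main obstacle is the corners. At a corner $s''$ contains a Dirac mass, so $s''/s$ must be read distributionally. The integration by parts is then really the pairing of the distribution $s''$ with the test-like function $1/s$, which is Lipschitz since $s \geq \sqrt\delta$. Equivalently, $\int s''/s = -\int s'(1/s)' = \int (s')^2/s^2$, which holds for $s \in W^{1,\infty}$ with $s'$ of compact support, and no pointwise second derivative is needed. The delta contributions at the corners are included automatically in this weak form. That is the correct content, since the optimizer places negative energy exactly at the corners.

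A second subtlety is that $\psi_\lambda$ is only approached as a limit. I would state the identity for the limiting expectation value, using the same $\epsilon, L$ regularization as in Lemma~\ref{lem:admissible}. Under the cutoff $\theta_L$, the contribution $c_0\int((\sqrt{\theta_L})')^2 = O(1/L)$ to the energy vanishes, and mollification converges in the weak form above. The identity is then the $\epsilon\to0$, $L\to\infty$ limit of an exact identity for smooth compactly supported weights.
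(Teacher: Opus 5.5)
Your computation is correct and is the paper's own proof. The paper expands $(\sqrt g)''/\sqrt g = g''/(2g)-(g')^2/(4g^2)$, adds $B'(\lambda)$ to obtain $\frac{c}{24\pi}\int (g'/g)'$, and discards this because $g$ is a positive constant off a bounded set; since $s'/s = g'/(2g)$, that is exactly your integration by parts, with the corner atoms entering in the same way.

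Your final paragraph should be dropped, though. After the cutoff, the family $(g_\epsilon+\lambda)\theta_L$ is no longer an additive shift, since its $\lambda$-derivative is $\theta_L$. Moreover, the optimizer of a weight vanishing outside $[-2L,2L]$ has infinite energy, because $s'/s=(\log s)'$ is not square integrable where $s\to 0$ at the edge of the support. The $O(1/L)$ term $c_0\int((\sqrt{\theta_L})')^2$ is the cutoff's cost to the bound $B$, not to $\langle P\rangle$. So there is no finite-$L$ identity to pass to the limit.

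The cutoff is needed only for the inequality. As the proof of Theorem~\ref{thm:main} notes, the optimizer for the uncut weight already has its energy density supported in the bounded interval. Mollification alone therefore suffices, and under it your identity holds exactly for each $\epsilon$ and converges as $\epsilon\to 0$.
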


\begin{proof}
Write $g = g_\lambda$.  Since $\partial_\lambda g = 1$ while $g'$ does not
depend on $\lambda$, and $\big((\sqrt g)'\big)^2 = (g')^2/4g$, we may
differentiate \eqref{eq:Blambda} under the integral to get
\begin{equation}
  B'(\lambda) \;=\; -\,\frac{c}{12\pi}\int \frac{(g')^2}{4g^2} .
  \label{eq:Bprime}
\end{equation}
On the other side, the optimizer has
$\langle T\rangle_{\psi_\lambda} = \frac{c}{12\pi}(\sqrt g)''/\sqrt g$, and
\[
  \frac{(\sqrt g)''}{\sqrt g}
  \;=\; \frac{g''}{2g} \;-\; \frac{(g')^2}{4g^2} ,
\]
so that, adding \eqref{eq:Bprime},
\begin{equation}
  \big\langle P\big\rangle_{\psi_\lambda} + B'(\lambda)
  \;=\; \frac{c}{12\pi}\int\left[\frac{g''}{2g}-\frac{(g')^2}{2g^2}\right]
  \;=\; \frac{c}{24\pi}\int_{\mathbb{R}}\left(\frac{g'}{g}\right)' .
  \label{eq:telescope}
\end{equation}
The last integrand is a total derivative.  Off a bounded set $g$ is a
positive constant, so $g'/g$ vanishes at both ends of the line and the
integral is zero.  This proves \eqref{eq:envid}.
\end{proof}

\begin{remark}
\label{rem:atoms}
The weights of Sections \ref{sec:top} and \ref{sec:bottom} have $g'$
discontinuous at the endpoints of $I$, so $g''$ carries an atom there.
Those atoms are not a correction to \eqref{eq:telescope} but part of it:
they are exactly what makes $g'/g$ telescope to zero.  Their weights are
computed where they are used, in \eqref{eq:deltas} and
Remark~\ref{rem:mirror}, and it is the vanishing of the modular weight at
the endpoints that makes them invisible to $\langle K_I\rangle$ while
contributing to $\langle P\rangle$.
\end{remark}

\section{The top of the range}
\label{sec:top}

\begin{theorem}
\label{thm:main}
Let $\alg{A}$ be a chiral conformal net of central charge $c$ and $I$ an
interval of length $R$.  Then for every $\beta > 0$,
\begin{equation}
  K_I \;\leq\; (1+\beta)\frac{\pi R}{2}\,P
  \;+\; \frac{c}{6}\Big[1 - \sqrt\beta\,\arctan\tfrac{1}{\sqrt\beta}\Big] ,
  \label{eq:family}
\end{equation}
and in the limit $\beta \to 0^+$,
\begin{equation}
  K_I \;\leq\; \frac{\pi R}{2}\,P \;+\; \frac{c}{6} .
  \label{eq:endpoint}
\end{equation}
Writing $M_{\max}(E,R)$ for the supremum of $\langle K_I\rangle_\psi$
over states with $\langle P\rangle_\psi \leq E$,
\begin{equation}
  M_{\max}(E,R)
  \;=\; \min_{\beta>0}\Big\{ (1+\beta)\tfrac{\pi R}{2}E
   + \tfrac{c}{6}\big[1-\sqrt\beta\arctan\tfrac1{\sqrt\beta}\big]\Big\}
  \;=\; \frac{\pi R}{2}E + \frac{c}{6}
   - \frac{\pi c^{2}}{288\,RE} + O\big((RE)^{-2}\big) ,
  \label{eq:Mmax}
\end{equation}
and neither constant can be improved:
\begin{equation}
  \sup_\psi \Big[ \langle K_I\rangle_\psi
   - \tfrac{\pi R}{2}\langle P\rangle_\psi \Big] \;=\; \frac{c}{6} .
  \label{eq:sharp}
\end{equation}
\end{theorem}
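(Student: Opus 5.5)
The plan is to apply the Fewster–Hollands inequality \eqref{eq:QEI}, through Lemma~\ref{lem:admissible}, to the complementary weight
\[
  g_\beta(x) \;=\; c_0 - 2\pi w(x)\ \ (x\in I), \qquad g_\beta(x) = c_0\ \ (x\notin I), \qquad c_0 = (1+\beta)\tfrac{\pi R}{2}.
\]
This is Proposition~\ref{prop:mobius}'s quadratic truncated at the endpoints. It is Lipschitz, has corners only at $a,b$, equals $c_0$ off $I$, and is bounded below by $\delta=\beta\pi R/2>0$, so Lemma~\ref{lem:admissible} applies. Since $\int g_\beta T = c_0P - K_I$, \eqref{eq:QEI} gives $K_I \le c_0 P + B$ with $B=\frac{c}{12\pi}\int((\sqrt{g_\beta})')^2$, read as an operator inequality by Remark~\ref{rem:operator}.

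The constant is computed explicitly. With $u=x-x_c$ and $s=\sqrt\beta R/2$, on $I$ we have $g_\beta=\frac{2\pi}{R}(u^2+s^2)$, so $((\sqrt{g_\beta})')^2=\frac{2\pi}{R}\,\frac{u^2}{u^2+s^2}$. Integrating over $|u|<R/2$ gives $2\pi[1-\sqrt\beta\arctan(1/\sqrt\beta)]$, and multiplying by $c/12\pi$ yields \eqref{eq:family}. For \eqref{eq:endpoint}, fix $\psi\in\alg D$ and let $\beta\to0^+$ in the scalar inequality, using $\sqrt\beta\arctan(1/\sqrt\beta)\to0$; the operator form then follows as before.

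For \eqref{eq:Mmax}, the upper bound $M_{\max}\le\min_\beta\{\cdots\}$ is immediate from \eqref{eq:family} and $\langle P\rangle\le E$. For the reverse inequality I will parametrize by $\lambda=c_0$, so that $\partial_\lambda g=1$ globally, and invoke Lemma~\ref{lem:envelope}.
\begin{itemize}
\item The optimizer $\psi_\lambda$ saturates \eqref{eq:QEI}, so $\langle K_I\rangle_{\psi_\lambda}=\lambda\langle P\rangle_{\psi_\lambda}+B(\lambda)$.
\item By \eqref{eq:envid}, $\langle P\rangle_{\psi_\lambda}=-B'(\lambda)$.
\item From \eqref{eq:Bprime}, $B'<0$ and $B'$ is increasing in $\lambda$, so $B$ is convex.
\item $-B'(\lambda)=\frac{c}{12\pi}\int (g')^2/4g^2$ sweeps $(0,\infty)$: it is of order $1/s\to\infty$ as $\beta\to0$ and tends to $0$ as $\beta\to\infty$. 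Hence for each $E$ there is a $\lambda$ with $-B'(\lambda)=E$.
\end{itemize}
At that $\lambda$ the state has energy exactly $E$ and modular energy $\lambda E+B(\lambda)=\min_\lambda\{\lambda E+B(\lambda)\}$, by the Legendre/convexity argument. So the minimum is attained and equals the supremum. The supremum over $\langle P\rangle\le E$ agrees with that over $\langle P\rangle=E$ because the minimum is increasing in $E$.

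For the asymptotics, I expand $1-\sqrt\beta\arctan(1/\sqrt\beta)=1-\frac{\pi}{2}\sqrt\beta+\beta+O(\beta^{3/2})$ and minimize $\frac{\pi R}{2}\beta E-\frac{\pi c}{12}\sqrt\beta$. The stationary point is $\sqrt\beta=c/(12RE)$, which gives the correction $-\pi c^2/(288RE)$; the remaining terms are $O((RE)^{-2})$. Sharpness \eqref{eq:sharp} then follows. The bound $\le c/6$ is \eqref{eq:endpoint}. For $\ge c/6$, along the optimizers with $E\to\infty$ we have $\langle K_I\rangle-\frac{\pi R}{2}\langle P\rangle = M_{\max}(E,R)-\frac{\pi R}{2}E\to c/6$, and \eqref{eq:Mmax} shows that the coefficient $\pi R/2$ cannot be lowered either.

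The main obstacle is the attainment step. The Fewster–Hollands optimizer is only approached by states in $\alg D$, and our weight has corners, so $\langle T\rangle_{\psi_\lambda}$ carries the endpoint atoms of Remark~\ref{rem:atoms}. My first attempt will take the approximating sequence from \cite{FewsterHollands2005} for the mollified, cut-off weights of Lemma~\ref{lem:admissible}. I will then check that $\langle P\rangle$ and $\langle K_I\rangle$ converge along it. For $K_I$ this should hold because $w$ vanishes at the atoms; for $P$ it is where the atom weights enter the telescoping in \eqref{eq:telescope}. Only then does the supremum equal the minimum rather than merely being bounded by it.
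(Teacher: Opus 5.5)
Apart from the attainment step, this is the paper's proof: the same truncated weight, the same integral, and the same envelope/Legendre mechanism through Lemma~\ref{lem:envelope}. Two of your departures are improvements. Letting $\beta\to0^+$ in the scalar inequality gives \eqref{eq:endpoint} more directly than evaluating the $\beta=0$ weight and invoking the second clause of Lemma~\ref{lem:admissible}. Your convexity of $B$, together with the fact that $-B'$ maps $(\pi R/2,\infty)$ onto $(0,\infty)$, is exactly what shows that every $E$ is reached and that the stationary point is the global minimum; the paper leaves this implicit.

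The attainment plan, however, points at the wrong family, and both the equality in \eqref{eq:Mmax} and \eqref{eq:sharp} rest on it. The cut-off weights $g_{\epsilon,L}=g_\epsilon\theta_L$ vanish off $[-2L,2L]$. They therefore fail the hypothesis of Lemma~\ref{lem:envelope} that the weight equal a positive constant off a bounded set, which is what makes $g'/g$ telescope to zero. Their optimizers also have infinite energy. In the cutoff zone the density $\frac{c}{12\pi}(\sqrt{\theta_L})''/\sqrt{\theta_L}$ is not integrable where $\theta_L\to0$, because a smooth $\sqrt{\theta_L}$ must vanish there to infinite order; for $\sqrt{\theta_L}\sim e^{-1/(2L-v)}$ it grows like $(2L-v)^{-4}$. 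The inequality constrains nothing where the weight vanishes. So along approximants to these optimizers $\langle P\rangle$ is not tied to $-B'(\lambda)$, and for approximants that track the formal optimizer it diverges at fixed $L$. The convergence check you propose would fail.

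The missing observation is the paper's. The cutoff $\theta_L$ is needed only to apply the Fewster--Hollands theorem to the \emph{inequality}. Attainment should use the optimizer of the uncut weight, $V'=1/g$. Because $g\equiv c_0$ off $I$, $(\sqrt g)''=0$ there, so the energy density is supported in $\bar I$. Its only singular parts are the atoms \eqref{eq:deltas}, which enter $\langle P\rangle$ but not $\langle K_I\rangle$. If you want smooth approximants, take the mollified but uncut $g_\epsilon$:
\begin{itemize}
\item it satisfies the hypotheses of Lemma~\ref{lem:envelope}, so its optimizer has $\langle P\rangle=-B_\epsilon'(\lambda)$ exactly;
\item $B_\epsilon'\to B'$ by dominated convergence, since $g_\epsilon\geq\delta$;
\item $\langle K_I\rangle$ converges as $\epsilon\to0$, because the smoothed atoms lie within $\epsilon$ of the zeros of $w$.
\end{itemize}
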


\begin{proof}[Proof of the inequalities]
Set $c_0 = (1+\beta)\pi R/2$ and
\begin{equation}
  g(x) \;=\; c_0 \;-\; 2\pi w(x)\,\chi_I(x) ,
  \label{eq:gdef}
\end{equation}
equal to $c_0$ off $I$, continuous since $w$ vanishes at the endpoints,
non-negative by \eqref{eq:wmax}, and satisfying
$\int g\,T = c_0 P - K_I$.  In the scaled coordinate
$u = (x-x_c)/R \in [-\frac12,\frac12]$ one has $w = R(\frac14 - u^2)$ and
\begin{equation}
  g \;=\; \frac{\beta\pi R}{2} + 2\pi R\,u^2 ,
  \label{eq:gscaled}
\end{equation}
so that, using $\frac{d}{dx} = \frac1R\frac{d}{du}$ and $dx = R\,du$,
\begin{equation}
  \int \Big(\tfrac{d}{dx}\sqrt g\Big)^2 dx
  \;=\; \int_{-1/2}^{1/2} \frac{(\partial_u g)^2}{4g}\,\frac{du}{R}
  \;=\; 2\pi\Big[1 - \sqrt\beta\,\arctan\tfrac{1}{\sqrt\beta}\Big] ,
  \label{eq:integral}
\end{equation}
independent of $R$, the factors of $R$ canceling.  The weight
\eqref{eq:gdef} is admissible in the sense of
Lemma~\ref{lem:admissible}, being bounded below by $\beta\pi R/2 > 0$, so
\eqref{eq:QEI} gives \eqref{eq:family}.

At $\beta = 0$ the weight is $g_0 = 2\pi R u^2$, which vanishes at the
midpoint of $I$.  By Remark~\ref{rem:vanishing} this causes no
difficulty.  The weight is constant at infinity and so not Schwartz, and
Corollary A.2 of \cite{FewsterHollands2005} does not apply verbatim; but
$\sqrt{g_0} = \sqrt{2\pi R}\,|u|$ is Lipschitz, and the argument of that
corollary identifies its weak derivative with the pointwise one off the
zero, a set of measure zero.  That derivative has constant modulus
$|(\sqrt{g_0})'| = \sqrt{2\pi/R}$, so the integral is
$(2\pi/R)\cdot R = 2\pi$ and the constant is
$\frac{c}{12\pi}\cdot 2\pi = \frac{c}{6}$.  Since
$g_\beta = g_0 + \beta\pi R/2$ globally, the second clause of
Lemma~\ref{lem:admissible} applies and gives \eqref{eq:endpoint}.
Finally $K_I \leq \lambda P + \mu$ gives
$\langle K_I\rangle \leq \lambda E + \mu$ on the stated states.
\end{proof}

\begin{proof}[Proof of exactness and sharpness]
Fewster and Hollands prove \eqref{eq:QEI} sharp: the right-hand side is
the infimum of the left over $\psi \in \alg{D}$, approached by
the vectors $U(\rho)^{-1}\Omega$, where $\rho$ is the circle
diffeomorphism whose induced reparametrization $V$ of the line satisfies
$V'(v) = 1/g(v)$, for which
$\langle T\rangle_\psi = \frac{c}{12\pi}(\sqrt g)''/\sqrt g$.  For our $g$
this optimizer is usable rather than formal, for two reasons.

First, off $I$ the weight is constant, so $(\sqrt g)'' = 0$ there and the
optimizer's energy density is supported in $\bar I$; the cutoff of
Lemma~\ref{lem:admissible} therefore costs it nothing.  Second, $g'$ has
a jump at each endpoint of $I$.  It vanishes outside $I$ and, by
\eqref{eq:gdef}, equals $-2\pi$ just inside $a$ and $+2\pi$ just inside
$b$; crossing either endpoint from left to right therefore changes $g'$
by $-2\pi$, and $(\sqrt g)'$ by $-\pi/\sqrt{c_0}$, at \emph{both}.
Hence $(\sqrt g)''$ carries a negative delta of that weight at each, so
that
\begin{equation}
  \langle T\rangle_\psi \Big|_{\text{endpoint}}
  \;=\; -\,\frac{c}{12\,c_0}\,\delta
  \qquad\text{at each of } a, b .
  \label{eq:deltas}
\end{equation}
Since $w(a) = w(b) = 0$, these contribute $-c/(6c_0)$ to
$\langle P\rangle$ and nothing at all to $\langle K_I\rangle$.  The
negative energy that pays for the constant sits exactly where the modular
weight vanishes.

Each $\beta$ therefore gives an upper bound attained in the limit by a
state of energy $E(\beta) = \langle P\rangle_{\psi_\beta}$, and by
Lemma~\ref{lem:envelope} that energy is exactly
$-\frac{2}{\pi R}\frac{d}{d\beta}\big[\frac{c}{6}b(\beta)\big]$, which is
the first-order condition for the minimization in \eqref{eq:Mmax}.
Hence the state saturating at parameter $\beta$ has precisely the energy
at which $\beta$ is optimal, the minimum over $\beta$ is attained, and
\eqref{eq:Mmax} is an equality rather than a bound.

For the rate, expand
\begin{equation}
  b(\beta) \;=\; 1 - \frac{\pi}{2}\sqrt\beta + \beta
   - \frac{\beta^2}{3} + O(\beta^3) ,
  \label{eq:bexpansion}
\end{equation}
obtained from $\arctan(1/t) = \frac{\pi}{2} - t + \frac{t^3}{3} - \cdots$
with $t = \sqrt\beta$.  Writing $\mu = \pi RE/2$ and minimizing
$\mu(1+t^2) + \frac{c}{6}(1 - \frac{\pi}{2}t + t^2)$ over $t$ gives
\begin{equation}
  t^* \;=\; \frac{\pi c}{4(3\pi RE + c)} ,
  \qquad
  \beta^* = (t^*)^2 \;\simeq\; \Big(\frac{c}{12\,RE}\Big)^2 ,
  \label{eq:betastar}
\end{equation}
at which the value is
$\mu + \frac{c}{6} - \dfrac{\pi^2c^2}{96\,(3\pi RE+c)}$.
Since $\pi^2c^2/\!\left[96(3\pi RE + c)\right]
= \pi c^2/(288\,RE) + O((RE)^{-2})$, this is \eqref{eq:Mmax}.

Finally the two constants are optimal.  From \eqref{eq:bexpansion},
$b'(\beta) = -\frac{\pi}{4\sqrt\beta} + 1 + O(\beta)$, so the envelope
relation gives $E(\beta) = \frac{c}{12R\sqrt\beta} - \frac{c}{3\pi R}
+ O(\sqrt\beta)$, which diverges as $\beta \to 0^+$.  Hence
\[
  \langle K_I\rangle_\psi - \tfrac{\pi R}{2}\langle P\rangle_\psi
  \;=\; \beta\,\tfrac{\pi R}{2}E(\beta) + \tfrac{c}{6}b(\beta)
  \;=\; \frac{c}{6} - \frac{\pi c}{24}\sqrt\beta + O(\beta) ,
\]
which increases to $c/6$ from below as $\beta \to 0^+$, giving
\eqref{eq:sharp}; and since $\sqrt{\beta^*} \simeq c/12RE$ the deficit is
$\pi c^2/(288\,RE)$, consistent with \eqref{eq:Mmax}.  The coefficient
$\pi R/2$ is optimal because $E(\beta) \to \infty$ along the same family
while $\langle K_I\rangle/(\pi RE/2) \to 1$, so any smaller coefficient
makes the supremum infinite.
\end{proof}

\begin{remark}[Legendre structure]
\label{rem:legendre}
Both \eqref{eq:Mmax} and its counterpart in Section~\ref{sec:bottom} have
the form $\min_\lambda[\lambda E + B(\lambda)]$ with $\lambda$ the
additive shift of the weight.  This is a Legendre transform, and it
explains an otherwise striking coincidence: the $\beta^*$ minimizing
\eqref{eq:Mmax} and the $\beta$ labeling the state that saturates at
energy $E$ are equal.  That is the envelope theorem, automatic once the
family is an additive shift and the bound is tight at each $\beta$, and
not a numerical accident; it is used in the proof of
Theorem~\ref{thm:main} in the form
$\frac{\pi R}{2}E = -\frac{c}{6}b'(\beta)$.  The one-parameter family is therefore the
structure of the answer rather than a device for obtaining it.
\end{remark}

\begin{remark}[The endpoint is an asymptote]
\label{rem:asymptote}
The bracket in \eqref{eq:family} behaves as
$1 - \frac{\pi}{2}\sqrt\beta + \beta + \cdots$ for small $\beta$, so the
derivative of the right side of \eqref{eq:Mmax} tends to $-\infty$ as
$\beta \to 0^+$ and $\beta = 0$ is never a local minimum.  The interior
optimum lies at $\beta^* \simeq (c/12RE)^2$ and beats the endpoint by
$\pi c^2/(288\,RE)$ at every finite $RE$.  Equation \eqref{eq:endpoint}
is the $RE \to \infty$ asymptote of the family, not a member of it that
wins at large $RE$.
\end{remark}

\begin{remark}[Where the negative energy sits]
\label{rem:where}
That the saturating state places its negative energy at the two endpoints
of $I$, and that the additive constant is the price of truncating the
M\"obius weight there, are the same fact.
Proposition~\ref{prop:mobius} says the untruncated weight needs no
constant at all.
\end{remark}

\begin{remark}[Small $RE$]
\label{rem:smallRE}
For large $\beta$ the bracket behaves as $1/(3\beta)$, so \eqref{eq:Mmax}
becomes $\min_\beta[(1+\beta)\frac{\pi R}{2}E + \frac{c}{18\beta}]$,
optimized at $\beta^* = \frac13\sqrt{c/\pi RE}$ with the two
$\beta$-dependent terms contributing $\frac16\sqrt{\pi cRE}$ each.  Hence
\begin{equation}
  M_{\max}(E,R) \;\longrightarrow\;
  \frac{\pi R}{2}E + \frac13\sqrt{\pi c R E}
  \qquad (RE \to 0) .
  \label{eq:smallRE}
\end{equation}
\end{remark}

\begin{corollary}[A sharp Bekenstein bound]
\label{cor:bekenstein}
Let $\psi$ be a state of mean energy at most $E$ and let $\Delta S_I$
denote the vacuum-subtracted entropy of the interval.  Then
\begin{equation}
  \Delta S_I \;\leq\; \frac{\pi R}{2}\,E \;+\; \frac{c}{6} .
  \label{eq:bekenstein}
\end{equation}
\end{corollary}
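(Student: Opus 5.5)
The plan is to combine positivity of relative entropy with the operator inequality \eqref{eq:endpoint} of Theorem~\ref{thm:main}. The vacuum-subtracted entropy should be read in the standard way following Casini and Blanco--Casini: $\Delta S_I = S(\rho_I^\psi) - S(\rho_I^\Omega)$. The relative entropy of the restricted states to $\alg{A}(I)$ then satisfies
\[
  S(\rho^\psi_I\|\rho^\Omega_I) \;=\; \Delta\langle \mathcal K\rangle_\psi - \Delta S_I \;\geq\; 0,
\]
where $\mathcal K = -\log\rho^\Omega_I$ is the modular Hamiltonian of the vacuum restricted to $I$. This gives $\Delta S_I \leq \Delta\langle\mathcal K\rangle_\psi$.

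The next step is to identify $\mathcal K$ with the one-sided weight $K_I$ of \eqref{eq:KI}, up to an additive constant. The vacuum modular flow of $\alg{A}(I)$ is generated by $\hat K_I = K_I - K_{I'}$ (Remark~\ref{rem:naming}). Since $K_{I'}$ is affiliated with the commutant, the part acting on $\alg{A}(I)$ is $K_I$. In the density-matrix language this means $-\log\rho_I^\Omega = K_I + \text{const}$. The constant drops out of the difference $\Delta\langle\mathcal K\rangle_\psi = \langle K_I\rangle_\psi - \langle K_I\rangle_\Omega$. By Remark~\ref{rem:normalization}, $\langle K_I\rangle_\Omega = 0$, so $\Delta S_I \leq \langle K_I\rangle_\psi$.

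This is also why the one-sided $K_I$ is the right object, not $\hat K_I$. Applying the same relative entropy argument to the full modular Hamiltonian gives only $S(\psi\|\Omega)\geq 0$ in terms of $\hat K_I$. Proposition~\ref{prop:mobius} bounds $\langle\hat K_I\rangle$ above by $\frac{\pi R}{2}E$, but the entropy combination there involves both $I$ and $I'$ and yields no constraint on $\Delta S_I$ alone. That is the sense of the abstract's remark that the modular form by itself imposes no constraint.

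To finish, take $\psi\in\alg{D}$ (or use the operator form, Remark~\ref{rem:operator}) and evaluate \eqref{eq:endpoint} in $\psi$:
\[
  \langle K_I\rangle_\psi \leq \tfrac{\pi R}{2}\langle P\rangle_\psi + \tfrac{c}{6} \leq \tfrac{\pi R}{2}E + \tfrac{c}{6}.
\]
Chaining this with $\Delta S_I \leq \langle K_I\rangle_\psi$ gives \eqref{eq:bekenstein}. If a slightly sharper statement is wanted, the same chain with $M_{\max}(E,R)$ from \eqref{eq:Mmax} replaces the right-hand side by the exact supremum.

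The main obstacle is the second step. In a type~III algebra neither $\rho_I$ nor $\Delta S_I$ exists literally, so "$-\log\rho^\Omega_I = K_I$" is only a regulated statement. I would handle it in the standard way. Either $\Delta S_I$ is understood through a regularization (lattice or split cutoff) in which the vacuum-subtracted quantities have finite limits and the relative entropy identity holds, or $\Delta S_I$ is defined as $\langle K_I\rangle_\psi - S(\psi\|\Omega)_{\alg{A}(I)}$ using Araki's relative entropy. With the second choice the corollary is immediate from $S\geq 0$ and the theorem. I would state the corollary under that convention and note that it agrees with the cutoff definition.
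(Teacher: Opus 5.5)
Your proposal is correct and follows the paper's own argument: positivity of relative entropy on $\alg{A}(I)$ gives $\Delta S_I\le\Delta\langle K_I\rangle=\langle K_I\rangle_\psi$ by Remark~\ref{rem:normalization}, and Theorem~\ref{thm:main} then bounds the right side. You invoke \eqref{eq:endpoint} where the paper cites \eqref{eq:Mmax}, which comes to the same thing since $M_{\max}(E,R)\le\frac{\pi R}{2}E+\frac{c}{6}$; your treatment of how $\Delta S_I$ is defined for a type~III algebra is more careful than the paper's, and the only slip is in an aside, where the paper's ``no constraint'' refers to Casini's observation that $\Delta S\le\Delta\langle K\rangle$ alone leaves the modular energy unbounded (Remark~\ref{rem:bekremark}), not to the distinction between the one-sided and full modular Hamiltonians.
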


\begin{proof}
Positivity of the relative entropy of $\psi$ with respect to the vacuum
on $\alg{A}(I)$ is the statement $\Delta S_I \leq \Delta\langle K_I
\rangle$, which is Casini's form of the Bekenstein bound
\cite{Casini2008}.  Since $\langle K_I\rangle_\Omega = 0$ by
Remark~\ref{rem:normalization}, $\Delta\langle K_I\rangle =
\langle K_I\rangle_\psi$, and \eqref{eq:Mmax} bounds it by the right-hand
side of \eqref{eq:bekenstein}.
\end{proof}

\begin{remark}
\label{rem:bekremark}
Casini observes of his modular form $\Delta S \leq \Delta\langle K
\rangle$ that it ``does not introduce new physical constraints
semiclassically'', since it holds automatically and nothing further
bounds the modular energy.
Theorem~\ref{thm:main} supplies exactly that missing bound in the chiral
case, and \eqref{eq:bekenstein} is Bekenstein's original form
$S \lesssim ER$ with an explicit coefficient and an explicit additive
constant.  Two qualifications.  The coefficient $\pi R/2$ and the
constant $c/6$ are optimal for the intermediate quantity
$\langle K_I\rangle$, by \eqref{eq:sharp}; whether
\eqref{eq:bekenstein} itself is saturated depends on how small the
relative entropy can be made at fixed modular energy, which we do not
determine.  And $E$ here is the expectation of the chiral translation
generator $P$, not of a two-component Hamiltonian, so comparison with the
usual statement requires matching conventions.
\end{remark}

\section{The bottom of the range}
\label{sec:bottom}

There is no quantum energy inequality for $K_I$ alone: the weight
$2\pi w\chi_I$ has $((\sqrt g)')^2 \sim 1/(x-a)$ at the endpoints and the
integral diverges logarithmically.  The repair is the weight
$A + 2\pi w\chi_I$ with $A>0$, non-negative, constant off $I$, and
bounded below by $A$, hence admissible by Lemma~\ref{lem:admissible} and
in fact better behaved than \eqref{eq:gdef}, which vanishes at the
midpoint.

\begin{theorem}
\label{thm:bottom}
With $s = A/(\pi R/2)$,
\begin{equation}
  A\,P + K_I \;\geq\; -\frac{c}{12\pi}\,I_{\rm low}(s) ,
  \qquad
  I_{\rm low}(s) = 2\pi\Big[\sqrt{1+s}\,\artanh\tfrac{1}{\sqrt{1+s}} - 1\Big] ,
  \label{eq:Ilow}
\end{equation}
independent of $R$.  Consequently the infimum of $\langle K_I\rangle$
over states of mean energy at most $E$ is
\begin{equation}
  M_{\min}(E,R)
  \;=\; -\min_{A>0}\Big[AE + \frac{c}{12\pi}I_{\rm low}\big(A/(\pi R/2)\big)\Big]
  \;\simeq\; -\frac{c}{12}\Big[\ln\frac{24\pi RE}{c} - 1\Big]
  \label{eq:Mmin}
\end{equation}
as $RE \to \infty$, while $|M_{\min}| \to \frac13\sqrt{\pi cRE}$ as
$RE \to 0$.
\end{theorem}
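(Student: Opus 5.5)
The plan is to run the argument of Theorem~\ref{thm:main} with the sign of the modular weight reversed. First I compute the Fewster--Hollands bound for $g = A + 2\pi w\chi_I$. Then I use Lemma~\ref{lem:envelope} to turn the resulting family of bounds into an exact Legendre transform. Finally I read off the two asymptotic regimes.

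\emph{The inequality.} For $A>0$ the weight $g$ is Lipschitz and equals $A$ off $I$. It is continuous because $w$ vanishes at $a,b$, it is smooth except at the two endpoint corners, and it is bounded below by $A$. Lemma~\ref{lem:admissible} therefore applies directly, with no limiting clause needed. Since $\int g\,T = AP + K_I$, \eqref{eq:QEI} gives $AP+K_I \ge -\frac{c}{12\pi}\int((\sqrt g)')^2$, and Remark~\ref{rem:operator} lets us read this as an operator inequality.

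To evaluate the integral, use the scaled coordinate $u$ of \eqref{eq:gscaled} and write $A = s\pi R/2$. On $I$ this gives $g = \frac{\pi R}{2}(1+s-4u^2)$ and $\partial_u g = -4\pi R u$. The integrand $(\partial_u g)^2/(4gR)$ then reduces to $8\pi u^2/(1+s-4u^2)$, and all factors of $R$ cancel as before. Substituting $v=2u$ and $\alpha^2 = 1+s$, the integral becomes
\[
\pi\int_{-1}^{1}\frac{v^2\,dv}{\alpha^2-v^2}
= \pi\Big[-2+2\alpha\,\artanh\tfrac1\alpha\Big],
\]
which is $I_{\rm low}(s)$. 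This proves \eqref{eq:Ilow}. For states with $\langle P\rangle\le E$ and $A>0$ it yields $\langle K_I\rangle \ge -[AE + \frac{c}{12\pi}I_{\rm low}]$. Taking the best $A$ gives the inequality half of \eqref{eq:Mmin}.

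\emph{Exactness.} The family $g_A$ is an additive shift, $\partial_A g_A = 1$, with every member constant off the same interval $I$. Lemma~\ref{lem:envelope} therefore gives $\langle P\rangle_{\psi_A} = -B'(A)$ for the Fewster--Hollands optimizer. By \eqref{eq:Bprime}, $B'(A) = -\frac{c}{12\pi}\int (g')^2/4g^2 < 0$, so this energy is positive.

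The jumps of $g'$ at the endpoints now have the opposite sign from \eqref{eq:deltas}. They produce positive delta atoms of energy. Because $w(a)=w(b)=0$, these atoms contribute to $\langle P\rangle$ but not to $\langle K_I\rangle$, as in Remark~\ref{rem:atoms}. The optimizer at parameter $A$ therefore has exactly the energy at which $A$ satisfies the first-order condition of the minimization. Since the optimizers attain equality in the QEI, the minimum in \eqref{eq:Mmin} is attained and is an equality, not just a bound.

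The step I expect to be the main obstacle is showing that every $E>0$ is reached. This requires $E(A)=-B'(A)$ to map $(0,\infty)$ onto $(0,\infty)$. The route I would try first is:
\begin{itemize}
\item $B$ is convex in $A$, since $(g')^2/4g$ is convex in the shift, so $E(A)$ is monotone decreasing;
\item $E(A)\to\infty$ as $A\to0$, reflecting the logarithmic divergence of $B$;
\item $E(A)\to 0$ as $A\to\infty$.
\end{itemize}
Together these give a bijection, so the first-order condition has a solution for every $E$ and the Legendre minimum is attained.

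\emph{Asymptotics.} For $RE\to\infty$ the optimal $A$ is small, so $s\to0$. Then $\artanh(1/\alpha)\approx\frac12\ln(4/s)$, and hence $\frac{c}{12\pi}I_{\rm low}\approx\frac{c}{12}[\ln(2\pi R/A)-2]$. Minimizing $AE$ plus this expression gives $A^*=c/(12E)$, with minimum value $\frac{c}{12}[\ln(24\pi RE/c)-1]$, as claimed.

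For $RE\to0$ the optimal $s$ is large. Then $\alpha\,\artanh(1/\alpha)-1\approx 1/(3s)$, so the bound becomes $AE + c\pi R/(36A)$. Its minimum is $2\sqrt{\pi cRE/36} = \frac13\sqrt{\pi cRE}$. This mirrors Remark~\ref{rem:smallRE}, so the top and bottom of the range meet in this limit.
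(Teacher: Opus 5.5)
Your proposal is correct and follows the paper's proof essentially step for step. You use the same shifted weight $A+2\pi w\chi_I$ in the Fewster--Hollands inequality, and your substitution $v=2u$, $\alpha=\sqrt{1+s}$ is the paper's $\int du/(k^2-u^2)$ computation with $\alpha=2k$. Exactness likewise comes from Lemma~\ref{lem:envelope}, with the positive endpoint atoms invisible to $\langle K_I\rangle$, and the small-$s$ and large-$s$ expansions are the same. Your convexity argument, showing that $E(A)=-B'(A)$ is a continuous decreasing bijection of $(0,\infty)$ onto itself, is a worthwhile addition: the paper leaves implicit both that every $E>0$ is reached and that the critical point is the global minimum.
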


\begin{proof}
Put $c_0 = A + \pi R/2$ and $k = \sqrt{1+s}/2 \geq \tfrac12$.  In the
scaled coordinate $u$ the weight reads
\begin{equation}
  g \;=\; A + 2\pi R\Big(\tfrac14 - u^2\Big)
  \;=\; 2\pi R\,\big(k^2 - u^2\big) ,
  \label{eq:glow}
\end{equation}
non-negative on $[-\frac12,\frac12]$ since $k \geq \frac12$, equal to $A$
at $u = \pm\frac12$ and hence continuous with the constant outside, and
bounded below by $A > 0$, so admissible by
Lemma~\ref{lem:admissible}; it is in fact better behaved than
\eqref{eq:gdef}, never vanishing.  With $\partial_u g = -4\pi Ru$,
\[
  \int \Big(\tfrac{d}{dx}\sqrt g\Big)^2 dx
  \;=\; \int_{-1/2}^{1/2}\frac{(\partial_u g)^2}{4g}\,\frac{du}{R}
  \;=\; 2\pi\!\int_{-1/2}^{1/2}\frac{u^2}{k^2-u^2}\,du ,
\]
again independent of $R$.  Writing
$\frac{u^2}{k^2-u^2} = -1 + \frac{k^2}{k^2-u^2}$ and using
$\int\frac{du}{k^2-u^2} = \frac1k\artanh\frac uk$ gives
$2\pi[2k\artanh\frac{1}{2k} - 1]$, which is $I_{\rm low}(s)$ on
substituting $2k = \sqrt{1+s}$.  Since $\int g\,T = A\,P + K_I$, the
inequality \eqref{eq:QEI} is \eqref{eq:Ilow}, and evaluating on states
with $\langle P\rangle \leq E$ and optimizing over $A$ gives
$M_{\min} \geq -\min_A[\,\cdot\,]$.

Attainment is the mirror of the argument in Section~\ref{sec:top}.  The
weight is constant off $I$, so the optimizer's energy density vanishes
there and the cutoff of Lemma~\ref{lem:admissible} costs it nothing; and
by Remark~\ref{rem:mirror} its endpoint deltas contribute to
$\langle P\rangle$ but not to $\langle K_I\rangle$, $w$ vanishing at $a$
and $b$.  So each $A$ gives a lower bound attained in the limit by a
state of energy $\langle P\rangle_{\psi_A}$, which by
Lemma~\ref{lem:envelope} equals
$-\frac{d}{dA}\big[\frac{c}{12\pi}I_{\rm low}(2A/\pi R)\big]$, precisely
the first-order condition for the minimization.  The state saturating at
parameter $A$ therefore has the energy at which $A$ is optimal, and the
first equality in \eqref{eq:Mmin} follows.

For the asymptotics, put $z = 1/\sqrt{1+s}$.  Then
$I_{\rm low} = 2\pi\big[\tfrac1z\artanh z - 1\big] = 2\pi\big[\tfrac{z^2}{3}
+ \tfrac{z^4}{5} + \cdots\big]$, so $I_{\rm low}(s) \simeq 2\pi/3s$ for
large $s$; and as $s \to 0^+$, $\artanh z = \frac12\ln\frac{1+z}{1-z}$
with $1-z \simeq s/2$ gives
$I_{\rm low}(s) = \pi\ln\frac4s - 2\pi
+ \frac{\pi s}{2}\big(\ln\frac4s + 1\big) + O(s^2)$, the logarithmic
divergence that the naive weight $2\pi w\chi_I$ suffers outright.

Large $RE$ selects small $s$.  With $s = 2A/\pi R$ the objective is
$AE + \frac{c}{12}\big[\ln\frac{2\pi R}{A} - 2\big] + O(s\ln s)$,
minimized at
$A^* = c/12E$, where it takes the value
$\frac{c}{12}\big[\ln\frac{24\pi RE}{c} - 1\big]$.  Small $RE$ selects
large $s$, where the objective is $AE + \frac{\pi cR}{36A}$, minimized at
$A^* = \frac16\sqrt{\pi cR/E}$ with value $\frac13\sqrt{\pi cRE}$.
\end{proof}

\begin{remark}[The mirror saturating state]
\label{rem:mirror}
The optimizer for \eqref{eq:glow} is the mirror of the one in
Theorem~\ref{thm:main}.  Here $g'$ equals $+2\pi$ just inside $a$ and $-2\pi$ just inside $b$, so
that crossing either endpoint changes it by $+2\pi$, and $(\sqrt g)''$
carries a \emph{positive} delta of weight $\pi/\sqrt A$ at each and
$\langle T\rangle_\psi = +\frac{c}{12A}\delta$ there, while
$(\sqrt g)'' < 0$ throughout the interior, $g$ being concave.  The state
is a smooth negative bulk inside $I$ with positive spikes at the
entangling points, and as before the spikes contribute to
$\langle P\rangle$ and not to $\langle K_I\rangle$.
\end{remark}

\begin{remark}[The shape of the range]
\label{rem:range}
The range is linear above and logarithmic below.  The two ends meet in
the small-$RE$ limit, where both $M_{\max} - \frac{\pi R}{2}E$ and
$|M_{\min}|$ approach $\frac13\sqrt{\pi cRE}$: the brackets of
\eqref{eq:Mmax} and \eqref{eq:Ilow} share the tail $1/(3t)$.  Both
quantities depend on $R$ and $E$ only through $RE$, and on $c$ only by
overall scale, as conformal invariance requires.
\end{remark}

\begin{lemma}[Dominance]
\label{lem:dominance}
$M_{\max}(E,R) > |M_{\min}(E,R)|$ for every $RE > 0$.
\end{lemma}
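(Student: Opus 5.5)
The plan is to compare the two Legendre-type formulas term by term, after writing both brackets as integrals over the same variable. Put $\mu = \pi R E/2$ and write $b(\beta) = 1 - \sqrt\beta\arctan(1/\sqrt\beta)$ for the bracket of \eqref{eq:Mmax}. With $s = A/(\pi R/2)$ we have $AE = s\mu$, and $\frac{c}{12\pi}I_{\rm low}(s) = \frac{c}{6}\,\ell(s)$ with $\ell(s) = \sqrt{1+s}\,\artanh\frac{1}{\sqrt{1+s}} - 1$. Theorem~\ref{thm:main} and Theorem~\ref{thm:bottom} then read
\[
  M_{\max} = \min_{\beta>0}\Big[(1+\beta)\mu + \tfrac{c}{6}\,b(\beta)\Big],
  \qquad
  |M_{\min}| = \min_{s>0}\Big[s\mu + \tfrac{c}{6}\,\ell(s)\Big].
\]
Both minima are attained, by the envelope argument of Lemma~\ref{lem:envelope} used in the two theorems.

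The key step is a Stieltjes-type representation of both brackets. From the computations in \eqref{eq:integral} and in the proof of Theorem~\ref{thm:bottom}, rescaling $v = 2|u|$ gives
\[
  b(\beta) = \int_0^1 \frac{v^2}{\beta + v^2}\,dv ,
  \qquad
  \ell(s) = \int_0^1 \frac{v^2}{1+s-v^2}\,dv .
\]
I would check each representation directly. The first follows from $\int_0^1 \frac{v^2}{\beta+v^2}\,dv = 1 - \sqrt\beta\arctan\frac{1}{\sqrt\beta}$. The second is the $2\pi\int u^2/(k^2-u^2)$ of the bottom computation with $4k^2 = 1+s$.

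A pointwise comparison at $s = \beta$ fails, because $\ell$ diverges logarithmically at $0$ while $b(0) = 1$. Instead I would compare at the shifted argument $s = 1+\beta$, which makes the linear terms coincide exactly: $s\mu = (1+\beta)\mu$. The denominators then satisfy
\[
  (2 + \beta - v^2) - (\beta + v^2) = 2(1 - v^2) \geq 0 ,
\]
with strict inequality for $v \in [0,1)$. Hence $\ell(1+\beta) < b(\beta)$ strictly for every $\beta > 0$.

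To conclude, let $\beta^*$ be the minimizer in \eqref{eq:Mmax}. Then
\[
  |M_{\min}| \;\leq\; (1+\beta^*)\mu + \tfrac{c}{6}\,\ell(1+\beta^*)
  \;<\; (1+\beta^*)\mu + \tfrac{c}{6}\,b(\beta^*) \;=\; M_{\max} .
\]
This proves the lemma for every $RE > 0$.

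The main obstacle is not the comparison itself, which is elementary once the integral forms are in hand. It is making sure the minimum in \eqref{eq:Mmax} is attained at an interior $\beta^* > 0$, so that the strict inequality survives. Remark~\ref{rem:asymptote} and the small-$RE$ behaviour supply this: the objective tends to $-\infty$ in slope as $\beta \to 0^+$ and grows linearly as $\beta \to \infty$. Should attainment be in doubt, I would fall back on the quantitative version, bounding $b(\beta) - \ell(1+\beta) = \int_0^1 \frac{2v^2(1-v^2)}{(\beta+v^2)(2+\beta-v^2)}\,dv$ below uniformly on compact $\beta$-ranges. A sharper choice of $s$ (below $1+\beta$) together with a variance-positivity estimate should give the advertised margin $\tfrac25\pi RE$, but that margin is not needed for the strict inequality claimed.
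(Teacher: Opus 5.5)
Your proof is correct, and it reaches the lemma by a genuinely different route from the paper's.

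\textbf{The paper's route: slopes.} Both $M_{\max}$ and $|M_{\min}|$ vanish at $\mu=0$. The envelope theorem gives $\frac{d}{d\mu}M_{\max}=1+\beta^*$ and $\frac{d}{d\mu}|M_{\min}|=s^*$. The paper then extracts $s^*<1+\beta^*$ from the matched first-order conditions \eqref{eq:matched}, by proving $J_+(k)>J_-(\sqrt{k^2+\tfrac12})$ and using strict monotonicity of $J_\pm$.

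\textbf{Your route: values.} You insert the trial point $s=1+\beta^*$ into the bottom objective. The inequality doing the work is the same in both arguments. The paper's $a^2=k^2+\tfrac12$ is exactly your shift $s=1+\beta$, and its $k^2+u^2<k^2+\tfrac12-u^2$ is your $\beta+v^2<2+\beta-v^2$. The paper applies it to the squared denominators of $J_\pm$; you apply it to the first powers in $b$ and $\ell$.

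\textbf{What your route buys.} It is shorter and needs less. It uses only the inequality half of Theorem~\ref{thm:bottom} (together with $M_{\min}\le 0$, witnessed by the vacuum) and exactness of \eqref{eq:Mmax} for the top. It needs no differentiation of the Legendre minima in $\mu$ and no monotonicity of $J_\pm$. Even interior attainment is dispensable. Since $\ell(1)<1=b(0)$, the comparison extends to the endpoint of the continuous, coercive objective on $[0,\infty)$, so the fallback you sketch is unnecessary.

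\textbf{What the paper's route buys.} It yields the slope inequality $D=1+\beta^*-s^*>0$ itself. That is, the margin $M_{\max}-|M_{\min}|$ increases with $RE$, which is where Theorem~\ref{thm:mono} starts.

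\textbf{The margin your route gives.} It is the explicit quantity $\frac{c}{6}\big[b(\beta^*)-\ell(1+\beta^*)\big]$. Since $b(\beta)-\ell(1+\beta)=\frac{4}{15\beta^2}+O(\beta^{-3})$ and $\beta^{*2}\simeq c/18\mu$, it reproduces the leading behaviour $\frac45\mu$ of \eqref{eq:margin} as $RE\to0$. As $RE\to\infty$, however, it tends to the constant $\frac{c}{6}\big[2-\sqrt2\ln(1+\sqrt2)\big]$, whereas the true margin exceeds $\frac45\mu$ there.
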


\begin{proof}
Write $\mu = \pi RE/2$.  Both quantities are Legendre minima in $\mu$,
\[
  M_{\max} = \min_{\beta>0}\Big[\mu(1+\beta) + \tfrac{c}{6}b(\beta)\Big] ,
  \qquad
  |M_{\min}| = \min_{s>0}\Big[\mu s + \tfrac{c}{12\pi}I_{\rm low}(s)\Big] ,
\]
and both vanish at $\mu = 0$, since $b(\beta) \to 0$ and
$I_{\rm low}(s) \to 0$ as their arguments diverge.  By the envelope
theorem $\frac{d}{d\mu}M_{\max} = 1 + \beta^*$ and
$\frac{d}{d\mu}|M_{\min}| = s^*$, so it suffices to show
$s^*(\mu) < 1 + \beta^*(\mu)$ for every $\mu > 0$.

Substitute $\beta = 4k^2$ and $s = 4a^2 - 1$, so that $k > 0$ and
$a > \frac12$.  The two integrals of Sections \ref{sec:top} and
\ref{sec:bottom} become
\[
  b(\beta) = \int_{-1/2}^{1/2}\frac{u^2}{k^2+u^2}\,du ,
  \qquad
  \frac{I_{\rm low}(s)}{2\pi} = \int_{-1/2}^{1/2}\frac{u^2}{a^2-u^2}\,du ,
\]
and differentiating under the integral, with
$dk/d\beta = 1/8k$ and $da/ds = 1/8a$, the two first-order conditions
take the matched form
\begin{equation}
  \mu \;=\; \frac{c}{24}\,J_+(k)
  \;=\; \frac{c}{24}\,J_-(a) ,
  \qquad
  J_\pm(\lambda) = \int_{-1/2}^{1/2}
  \frac{u^2}{(\lambda^2 \pm u^2)^2}\,du .
  \label{eq:matched}
\end{equation}
We write also
\begin{equation}
  K_\pm(\lambda) \;=\; \int_{-1/2}^{1/2}
  \frac{u^2}{(\lambda^2 \pm u^2)^3}\,du ,
  \qquad\text{so that}\qquad
  J_\pm'(\lambda) = -\,4\lambda\,K_\pm(\lambda) ,
  \label{eq:Kdef}
\end{equation}
these being the quantities that control the derivative of the matching.
In these variables $s < 1 + \beta$ reads $4a^2 - 1 < 1 + 4k^2$, that is
$a^2 < k^2 + \tfrac12$.  Both $J_\pm$ are strictly decreasing, so by
\eqref{eq:matched} this is equivalent to
\begin{equation}
  J_+(k) \;>\; J_-\!\big(\sqrt{k^2+\tfrac12}\,\big) .
  \label{eq:pointwise}
\end{equation}
Now compare the integrands.  For $|u| < \frac12$ we have
$0 < k^2 + u^2 < k^2 + \tfrac12 - u^2$, since the second inequality is
$2u^2 < \tfrac12$; and this holds on precisely the interval of
integration, with equality only at its endpoints.  Squaring and
inverting, the integrand on the left of \eqref{eq:pointwise} strictly
exceeds that on the right for $0 < |u| < \frac12$, and
\eqref{eq:pointwise} follows.
\end{proof}

\begin{remark}[The margin]
\label{rem:fourfifths}
The quantity $D = 1 + \beta^* - s^*$ admits a compact description.
Substitute $2k = \cot\theta$ and $2a = \coth\tau$, so that
$\beta = \cot^2\theta$ and $s = \operatorname{csch}^2\tau$, with
$\theta \in (0,\tfrac\pi2)$ and $\tau > 0$.  The two integrals of
\eqref{eq:matched}, and the bounds from which they came, then evaluate to
\begin{equation}
\begin{aligned}
  J_+ &= 2\tan\theta\Big(\theta - \tfrac{\sin 2\theta}{2}\Big) , &\qquad
  K_+ &= 2\tan^3\theta\Big(\theta - \tfrac{\sin 4\theta}{4}\Big) , \\
  J_- &= 2\tanh\tau\Big(\tfrac{\sinh 2\tau}{2} - \tau\Big) , &\qquad
  K_- &= 2\tanh^3\tau\Big(\tfrac{\sinh 4\tau}{4} - \tau\Big) ,
\end{aligned}
\label{eq:trighyp}
\end{equation}
in which the lower pair is the analytic continuation of the upper under
$\theta \mapsto -i\tau$, exactly for $J$ and up to sign for $K$.  In these
variables the matching condition is $J_+ = J_-$ and
\begin{equation}
  D \;=\; \csc^2\theta \;-\; \operatorname{csch}^2\tau .
  \label{eq:Dtrig}
\end{equation}

Both limits follow at once.  As $RE \to \infty$ we have
$\theta \to \tfrac\pi2$ and $\tau \to \infty$, so $D \to 1$.  As
$RE \to 0$ both tend to zero; expanding \eqref{eq:trighyp} gives
$J_+ = \tfrac43\theta^4 + \tfrac{8}{45}\theta^6 + O(\theta^8)$ and
$J_- = \tfrac43\tau^4 - \tfrac{8}{45}\tau^6 + O(\tau^8)$, so the matching
forces
\begin{equation}
  \tau \;=\; \theta\Big(1 + \tfrac{\theta^2}{15} + O(\theta^4)\Big) ,
  \label{eq:tautheta}
\end{equation}
and with $\csc^2\theta = \theta^{-2} + \tfrac13 + O(\theta^2)$ and
$\operatorname{csch}^2\tau = \tau^{-2} - \tfrac13 + O(\tau^2)$,
equation \eqref{eq:Dtrig} gives
$D \to \tfrac{2}{15} + \tfrac23 = \tfrac45$.  Hence
\begin{equation}
  M_{\max} - |M_{\min}| \;\simeq\; \tfrac45\,\mu \;=\; \tfrac{2\pi RE}{5}
  \qquad (RE \to 0) .
  \label{eq:margin}
\end{equation}

That $D$ rises monotonically between these two limits is
Theorem~\ref{thm:mono} below.  It is worth noting why the two arguments
already given do not reach it.  The pointwise comparison that settles
\eqref{eq:pointwise} yields exactly $a^2 - k^2 < \tfrac12$ and nothing
sharper, since $2u^2 < \tfrac12$ holds precisely on the interval of
integration; and no series argument will do, because by
\eqref{eq:trighyp} and \eqref{eq:tautheta} the difference $K_+ - K_-$
vanishes to twelfth order in $\theta$, the ratio
$(K_+-K_-)/\theta^{12}$ being
$\tfrac{2048}{7875}\big(1+\tfrac25\theta^2\big) + O(\theta^4)$.
What settles it is a change of variable.
\end{remark}

\begin{theorem}[Monotonicity of the margin]
\label{thm:mono}
$D = 1 + \beta^* - s^*$ is strictly increasing in $RE$.  Consequently
$D \in (\tfrac45, 1)$ throughout, and
\begin{equation}
  M_{\max} - |M_{\min}| \;>\; \tfrac45\,\mu \;=\; \tfrac{2\pi RE}{5}
  \qquad\text{for every } RE > 0 ,
  \label{eq:sharpmargin}
\end{equation}
which strengthens Lemma~\ref{lem:dominance}.
\end{theorem}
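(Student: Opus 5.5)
The plan is to reduce strict monotonicity of $D$ to a single inequality between the integrals $K_\pm$ of \eqref{eq:Kdef}, evaluated at matched points. Then I would prove that inequality by rewriting it as the positivity of a variance.

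\emph{Step 1: reduce to a comparison of $K_+$ and $K_-$.} I work in the variables $k,a$ of Lemma~\ref{lem:dominance}, where $\beta^* = 4k^2$, $s^* = 4a^2-1$ and hence $D = 2 + 4k^2 - 4a^2$. By \eqref{eq:matched}, $\mu = \frac{c}{24}J_+(k)$. Since $J_+$ is strictly decreasing, $RE$ increasing corresponds exactly to $k$ decreasing. The matching $J_+(k) = J_-(a)$ defines $a$ as a smooth function of $k$. Using $J_\pm' = -4\lambda K_\pm$ from \eqref{eq:Kdef}, implicit differentiation gives
\[
  \frac{da}{dk} = \frac{k\,K_+(k)}{a\,K_-(a)} ,
  \qquad\text{so}\qquad
  \frac{dD}{dk} = 8k\Big(1 - \frac{K_+(k)}{K_-(a)}\Big) .
\]
Strict increase of $D$ in $RE$ is therefore equivalent to the following claim: whenever $J_+(k) = J_-(a)$, we have $K_+(k) > K_-(a)$.

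\emph{Step 2: the Stieltjes substitution.} The remark following Lemma~\ref{lem:dominance} shows that no pointwise or series comparison will work. The difference $K_+ - K_-$ vanishes to twelfth order, which signals an identity-level cancellation. I would therefore change variables so that $J_\pm$ and $K_\pm$ become a first and second moment of one common variable under measures that can be compared. Concretely, set $x = 1/(\lambda^2 \pm u^2)$. Define the probability density on $[-\tfrac12,\tfrac12]$ proportional to $u^2/(\lambda^2\pm u^2)^2$, normalised by $J_\pm$. Then $K_\pm/J_\pm$ is the mean of $x$ under this density. Substituting $y = u^2$, and then passing to the Stieltjes form in the spectral parameter $\lambda^2$, writes $J_\pm$ and $K_\pm$ as $\int d\sigma(y)\,(\lambda^2\pm y)^{-m}$ for a fixed measure $\sigma$ on $[0,\tfrac14]$, with $m=2,3$.

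Under the matching $J_+ = J_-$, I would aim to express $K_+ - K_-$ as a positive multiple of $\mathbb{E}[X^2] - (\mathbb{E}X)^2$ for a suitable random variable $X$ under a combined measure. The natural candidate is the reciprocal resolvent, whose sign flips between the two branches. I would first test this against \eqref{eq:tautheta}. It should reproduce the leading coefficient $\tfrac{2048}{7875}\theta^{12}$, as the variance of a nearly degenerate distribution. This identification is the main obstacle. If the direct moment form fails, my fallback is the trigonometric forms \eqref{eq:trighyp}: parametrise by $\theta$ and $\tau$, and seek a representation of $\theta - \tfrac{\sin4\theta}{4}$ against $\theta-\tfrac{\sin2\theta}{2}$ as a Stieltjes transform in $\tan^2\theta$.

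\emph{Step 3: conclusions.} Given strict monotonicity, the limits computed in the remark on the margin ($D\to\tfrac45$ as $RE\to0$ and $D\to1$ as $RE\to\infty$) give $D\in(\tfrac45,1)$. Finally, both $M_{\max}$ and $|M_{\min}|$ vanish at $\mu=0$, and by the envelope theorem
\[
  \frac{d}{d\mu}\big(M_{\max}-|M_{\min}|\big) = D > \tfrac45 .
\]
Integrating from $0$ to $\mu$ yields \eqref{eq:sharpmargin}.
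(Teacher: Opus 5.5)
Your reduction, substitution and conclusion match the paper, but the proposal has a genuine gap at Step 2.

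\textbf{What is right.} Step 1 is correct and is the paper's reduction: your $dD/dk = 8k(1-K_+/K_-)$ is equivalent to $dD/dJ = 2(1/K_- - 1/K_+)$. Your substitution $y=u^2$ with $d\sigma=\sqrt y\,dy$ is also the paper's. Step 3 is the paper's closing argument.

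\textbf{The gap.} The heart of the theorem is the inequality $K_+(k)>K_-(a)$ on the matched curve $J_+(k)=J_-(a)$. You leave this as ``the main obstacle'' and give no identity that produces it. The framing you set up does not supply one. With the normalised densities $\propto u^2/(\lambda^2\pm u^2)^2$, the ratios $K_\pm/J_\pm$ are means of two different variables under two different measures, and their difference is not a variance. Reproducing $\tfrac{2048}{7875}\theta^{12}$ would only be a consistency check, not a proof. The trigonometric fallback has no concrete mechanism.

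\textbf{The missing idea.} It is an exact differential identity, followed by an integration by parts that moves the weight from $\sqrt v\,dv$ to the probability measure $d\nu=dv/\sqrt v$ on $(0,\tfrac14)$.

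With $r_+=1/(k^2+v)$ and $r_-=1/(a^2-v)$ one has $r_+'=-r_+^2$ and $r_-'=+r_-^2$. Set $g=(r_+^2+r_+r_-+r_-^2)/(r_++r_-)$. Since $1/r_++1/r_-=k^2+a^2$ is constant, this equals $r_++r_--1/(k^2+a^2)$. It satisfies $g'=-(r_+^2-r_-^2)$, and hence $r_+^3-r_-^3=-\tfrac12(g^2)'$.

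Integrating by parts (the boundary term at $v=0$ vanishes) gives
\[
J_+-J_-=\tfrac12\Big[\textstyle\int g\,d\nu-g(\tfrac14)\Big],\qquad
K_+-K_-=\tfrac14\Big[\textstyle\int g^2\,d\nu-g(\tfrac14)^2\Big].
\]
The matching condition therefore says exactly that the $\nu$-mean of $g$ equals its endpoint value $g(\tfrac14)$. Then $K_+-K_-$ is one quarter of the $\nu$-variance of $g$. This is strictly positive, because $g'$ vanishes at most at the single point $v_0=(a^2-k^2)/2$ where $r_+=r_-$.

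The boundary term at $v=\tfrac14$ is what makes this work. Without it, the constraint $J_+=J_-$ is not a statement about a mean, and no variance appears.
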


\begin{proof}
We first make explicit the implicit differentiation behind
Lemma~\ref{lem:dominance}.  Writing $\beta = 4k^2$ and $s = 4a^2-1$ and
using \eqref{eq:Kdef}, the two first-order conditions give
$dJ_+/d\beta = -K_+/2$ and $dJ_-/ds = -K_-/2$, so that along the matched
curve $J_+ = J_- = J$,
\begin{equation}
  \frac{dD}{dJ}
  \;=\; \frac{d\beta}{dJ} - \frac{ds}{dJ}
  \;=\; 2\left(\frac{1}{K_-} - \frac{1}{K_+}\right) .
  \label{eq:dDdJ}
\end{equation}
Since $J$ increases as $RE$ grows, $D$ is strictly increasing in $RE$ if
and only if $K_+ > K_-$ at matched $J$, and it is this that we prove.

Substitute $v = u^2$ in
\eqref{eq:matched}.  Since $du = dv/(2\sqrt v)$ and both integrands are
even, the four quantities become moments of the single measure
$d\rho(v) = \sqrt v\,dv$ on $(0,\tfrac14)$:
\begin{equation}
  J_+ = \int_0^{1/4}\!r_+^2\,d\rho , \quad
  K_+ = \int_0^{1/4}\!r_+^3\,d\rho , \quad
  J_- = \int_0^{1/4}\!r_-^2\,d\rho , \quad
  K_- = \int_0^{1/4}\!r_-^3\,d\rho ,
  \label{eq:stieltjes}
\end{equation}
where
\begin{equation}
  r_+(v) = \frac{1}{k^2+v} , \qquad r_-(v) = \frac{1}{a^2-v} ,
  \label{eq:albe}
\end{equation}
both positive on $(0,\tfrac14)$, the second because $a > \tfrac12$.  This
is the Stieltjes representation of the analytic function whose two real
branches carry the upper and lower families: $r_+$ and $r_-$ are its
resolvents at $-k^2$ and at $a^2$.

Now set
\begin{equation}
  g \;=\; \frac{r_+^2+r_+r_-+r_-^2}{r_++r_-} ,
  \qquad\text{so that}\qquad
  r_+^3 - r_-^3 \;=\; g\,\big(r_+^2-r_-^2\big) ,
  \label{eq:gdef2}
\end{equation}
the second identity being the factorization of the difference of cubes
divided by that of squares.  The point of the substitution is that $g$
satisfies an exact differential identity.  From \eqref{eq:albe},
$r_+' = -r_+^2$ and $r_-' = +r_-^2$, so
\[
  g' \;=\; \frac{(r_+^2+2r_+r_-)r_+' + (r_-^2+2r_+r_-)r_-'}
  {(r_++r_-)^2}
  \;=\; \frac{-(r_+^2-r_-^2)\big[(r_+^2+r_-^2)+2r_+r_-\big]}
  {(r_++r_-)^2} ,
\]
that is
\begin{equation}
  g' \;=\; -\big(r_+^2 - r_-^2\big) ,
  \qquad\text{whence}\qquad
  r_+^3-r_-^3 \;=\; -\,g\,g' \;=\; -\tfrac12\big(g^2\big)' .
  \label{eq:gprime}
\end{equation}

Let $d\nu(v) = dv/\sqrt v$ on $(0,\tfrac14)$, a \emph{probability} measure,
since $\int_0^{1/4}v^{-1/2}dv = 2\sqrt{v}\,\big|_0^{1/4} = 1$.  Integrating
by parts, and noting that the boundary term at $v=0$ vanishes because
$\sqrt v \to 0$ while $g$ stays bounded,
\begin{align}
  J_+ - J_- &= -\int_0^{1/4}\! g'\sqrt v\,dv
  \;=\; \tfrac12\Big[\textstyle\int g\,d\nu - g(\tfrac14)\Big] ,
  \label{eq:Jbyparts}\\
  K_+ - K_- &= -\tfrac12\int_0^{1/4}\!\big(g^2\big)'\sqrt v\,dv
  \;=\; \tfrac14\Big[\textstyle\int g^2\,d\nu - g(\tfrac14)^2\Big] .
  \label{eq:Kbyparts}
\end{align}
So the hypothesis $J_+ = J_-$ says exactly that the $\nu$-mean of $g$
equals its value at the right endpoint,
\begin{equation}
  \int g\,d\nu \;=\; g(\tfrac14) ,
  \label{eq:meanvalue}
\end{equation}
and the conclusion $K_+ > K_-$ says that the $\nu$-mean of $g^2$ exceeds
the square of that value.  By Cauchy-Schwarz for the probability measure
$\nu$,
\[
  \int g^2\,d\nu \;\geq\; \Big(\int g\,d\nu\Big)^{\!2}
  \;=\; g(\tfrac14)^2 ,
\]
with equality if and only if $g$ is $\nu$-almost everywhere constant.  By
\eqref{eq:gprime} that would require $r_+^2 = r_-^2$ throughout,
whereas $r_+ = r_-$ only at the single point
$v_0 = (a^2-k^2)/2$.  The inequality is therefore strict, and
$K_+ > K_-$.

Finally, $D$ is strictly increasing with limits $\tfrac45$ and $1$ by
Remark~\ref{rem:fourfifths}, so $\tfrac45 < D < 1$, and
$\frac{d}{d\mu}(M_{\max}-|M_{\min}|) = D$ with both vanishing at
$\mu = 0$ gives \eqref{eq:sharpmargin}.
\end{proof}

\begin{remark}
The mechanism is worth isolating.  The substitution $v = u^2$ turns both
families into moments of one measure, after which the ratio
\eqref{eq:gdef2} of the cubic to the quadratic difference obeys
$g' = -(r_+^2-r_-^2)$ exactly.  That single identity converts the
constraint and the conclusion into the first and second $\nu$-moments of
the same function, and what separates them is then nothing but the
positivity of a variance.  The twelfth-order degeneracy noted in
Remark~\ref{rem:fourfifths} is invisible here because no expansion is
taken.
\end{remark}

\begin{remark}[Numerical range versus compression]
\label{rem:compression}
Theorems \ref{thm:main} and \ref{thm:bottom} concern the numerical range
over the ball of states of mean energy at most $E$.  Writing $N_{\max}$
and $N_{\min}$ for the extremes over unit vectors in the range of the
spectral projection $P_E$, that subspace lies inside the ball, so
$N_{\max} \leq M_{\max}$ and $|N_{\min}| \leq |M_{\min}|$, whence
$\|P_EK_IP_E\| \leq M_{\max}$.  Equality would require
$N_{\max} = M_{\max}$, that is spectral concentration of the saturating
states, which we do not have.  The bound uses
Lemma~\ref{lem:dominance}, without which the right-hand side would read
$\max(M_{\max},|M_{\min}|)$.
\end{remark}

\section{Several intervals}
\label{sec:multi}

The constant $c/6$ was identified in Remark~\ref{rem:where} as the price
of truncating the M\"obius weight at the two endpoints of $I$.  If that
reading is correct the constant should be additive over entangling
points, and it is.

\begin{proposition}[Additivity over entangling points]
\label{prop:multi}
Let $I_1,\dots,I_n$ be pairwise disjoint intervals of equal length $R$,
let $w_i$ be the modular weight \eqref{eq:KI} of $I_i$, and put
$K = \sum_{i=1}^n K_{I_i}$.  Then
\begin{equation}
  K \;\leq\; \frac{\pi R}{2}\,P \;+\; \frac{n\,c}{6} ,
  \label{eq:multi}
\end{equation}
and the constant is $c/12$ per entangling point.
\end{proposition}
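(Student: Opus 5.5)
The plan is to repeat the proof of \eqref{eq:endpoint} with a weight truncated at all $2n$ endpoints. For $\beta>0$ put $c_0=(1+\beta)\pi R/2$ and
\[
  g(x) \;=\; c_0 \;-\; 2\pi\sum_{i=1}^n w_i(x)\,\chi_{I_i}(x) .
\]
Since the $I_i$ are disjoint, at most one term of the sum is nonzero at any $x$. So $g$ equals $c_0$ off $\bigcup_i I_i$, and it is continuous because each $w_i$ vanishes at the endpoints of $I_i$. It is bounded below by $\beta\pi R/2>0$ by \eqref{eq:wmax}, since all intervals have the same length $R$. It is Lipschitz, smooth apart from the $2n$ corners, and constant outside a bounded set. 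Hence $g$ is admissible in the sense of Lemma~\ref{lem:admissible}, and $\int g\,T = c_0P - K$.

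Next, apply \eqref{eq:QEI}. The derivative $(\sqrt g)'$ vanishes off $\bigcup_i I_i$, so the Fewster-Hollands integral splits as a sum over the intervals. On each $I_i$, in the scaled coordinate $u=(x-x_{c,i})/R$, the weight has exactly the form \eqref{eq:gscaled}. Each interval therefore contributes the value \eqref{eq:integral}, namely $2\pi[1-\sqrt\beta\arctan(1/\sqrt\beta)]$, independent of the position of $I_i$. This gives
\[
  K \;\leq\; (1+\beta)\tfrac{\pi R}{2}P \;+\; \tfrac{nc}{6}\Big[1-\sqrt\beta\arctan\tfrac1{\sqrt\beta}\Big] .
\]

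For the limit $\beta\to0^+$, the family $g_\beta = g_0 + \beta\pi R/2$ is an additive shift globally, so the second clause of Lemma~\ref{lem:admissible} applies. At $\beta=0$ we have $\sqrt{g_0} = \sqrt{2\pi R}\,|u|$ on each interval and $\sqrt{g_0}=\sqrt{\pi R/2}$ outside. This function is Lipschitz, with $|(\sqrt{g_0})'| = \sqrt{2\pi/R}$ on each $I_i$. Each interval thus contributes $2\pi$, and the constant is $\frac{c}{12\pi}\cdot 2\pi n = nc/6$, which is \eqref{eq:multi}. Dividing by the $2n$ endpoints gives $c/12$ per entangling point.

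The main point needing care is the same as in Theorem~\ref{thm:main}: $g_0$ now vanishes at $n$ interior points, one at each midpoint, so Corollary A.2 of \cite{FewsterHollands2005} must be adapted. Its argument is local, so the weak derivative still agrees with the pointwise one off a finite set. The monotone-convergence step of Lemma~\ref{lem:admissible} then goes through unchanged.

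If sharpness of $nc/6$ is also wanted, I would use the optimizer of the preceding construction. Its energy density is supported in $\bigcup_i\bar I_i$. At each of the $2n$ endpoints it carries a negative delta of weight $c/(12c_0)$, as in \eqref{eq:deltas}, which contributes to $\langle P\rangle$ but not to $\langle K\rangle$. The envelope argument of Lemma~\ref{lem:envelope} then applies verbatim.
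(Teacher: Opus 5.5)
Your proof is correct and is essentially the paper's argument. You use the same truncated weight $c_0 - 2\pi\sum_i w_i\chi_{I_i}$, get admissibility from Lemma~\ref{lem:admissible} together with its additive-shift clause, and localize the Fewster--Hollands integral to $2\pi$ per interval. The differences are minor. You record the intermediate bound $\frac{nc}{6}\big[1-\sqrt\beta\arctan\frac1{\sqrt\beta}\big]$ before sending $\beta\to0^+$, whereas the paper works at $c_0=\pi R/2$ directly; since that bracket is below $1$, this limit is in fact immediate without the $\beta=0$ Sobolev step. Your sharpness sketch also goes beyond what the proposition asserts.
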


\begin{proof}
Take $g = c_0 - 2\pi\sum_i w_i\chi_{I_i}$ with $c_0 = \pi R/2$.  Since the
intervals are disjoint, at most one summand is active at each point, so
by \eqref{eq:wmax} the weight is non-negative and vanishes exactly at the
$n$ midpoints; and since each $w_i$ vanishes at the endpoints of $I_i$,
$g$ is continuous, equalling $c_0$ off $\bigcup_i I_i$.  It is Lipschitz, smooth apart from the $2n$ corners at the interval
endpoints, and constant off $\bigcup_i I_i$, so it is admissible by
Lemma~\ref{lem:admissible} for every positive additive shift; the weight
itself is the limit of that family as the shift tends to zero, and the
second clause of the lemma applies exactly as at $\beta = 0$ in the proof
of Theorem~\ref{thm:main}.  Also $\int g\,T = c_0 P - K$.

The right-hand side of \eqref{eq:QEI} localizes.  On $I_i$ the calculation
is that of Theorem~\ref{thm:main} at $\beta = 0$, contributing $2\pi$;
between and outside the intervals $g$ is constant and contributes
nothing.  Hence $\int((\sqrt g)')^2 = 2\pi n$ and the constant is
$\frac{c}{12\pi}\cdot 2\pi n = nc/6$, which is \eqref{eq:multi}.
\end{proof}

\begin{remark}
Two qualifications.  The object bounded is the sum of the one-sided
weights of the individual intervals, which for $n \geq 2$ differs from the
modular Hamiltonian of the union: the latter acquires a nonlocal piece,
known explicitly for the free fermion.  And equal lengths are used only to
make the weight vanish at every midpoint at once; for unequal lengths the
same argument with $c_0 = \pi R_{\max}/2$ gives
$\frac{c}{6}\sum_i b(\beta_i)$ with $\beta_i = R_{\max}/R_i - 1$, which
reduces to $nc/6$ when the lengths agree.  Neither qualification affects
the reading of the constant: it is $c/12$ per entangling point, and the
count is what changes.
\end{remark}

\section{Discussion}
\label{sec:disc}

\subsection*{Relation to existing bounds}

Blanco, Casini, Leston and Rosso \cite{BCLR2018} derive modular energy
inequalities from monotonicity of relative entropy and describe their
result as improving on \cite{FewsterHollands2005}.  The improvement is an
additive free-entropy term which they show vanishes for pure optimizers,
operating only for mixed states; and comparing the two bounds directly
they note that the Fewster-Hollands one is the more restrictive, since it
is sharp.  The saturating state of Theorem~\ref{thm:main} is a vector
state, so \eqref{eq:sharp} is unaffected.

Casini and Martinek \cite{CasiniMartinek2024} obtain an energy inequality
in which the smeared energy density is bounded below by an operator
rather than a number.  They describe \eqref{eq:QEI} as the most general
and sharpest of the numerical bounds and observe that neither of the two
is generally stronger.  Their result is for the chiral fermion
specifically, whereas Theorem~\ref{thm:main} holds for any chiral net
with a stress tensor.

To our knowledge the operator inequality \eqref{eq:endpoint} has not
appeared in the literature.  The nearest neighbors differ in the quantity bounded or in
the direction of the inequality.  Casini's route to the Bekenstein bound
\cite{Casini2008} bounds entropy by energy, not modular energy by the
translation generator.  \cite{BCLR2018} bound modular energy from below
and by entropy.  \cite{CasiniMartinek2024} bound energy density from
below by an operator.  The half-sided modular literature descending from
Borchers \cite{Borchers1992,Borchers1996,ArakiZsido2005,Borchers2000}
gives the exact commutation relation between modular flow and
translations rather than a quantitative inequality; we searched it
without finding one.  A negative result from the open literature is of
course weak evidence.

\subsection*{Outlook}

The natural extension is to higher dimensions, and
Remark~\ref{rem:highd} locates what is required.  The leading coefficient
comes free in every dimension, the M\"obius decomposition and the
positive cone being available there.  The additive constant,
however, is not merely unavailable but obstructed, and it is worth saying
why, since the obstruction identifies exactly what is special about the
two-dimensional case.

The constant would come from a quantum energy inequality for the weight
$g = \pi R - 2\pi w_{\text{ball}}\chi_{\text{ball}}$, with
$w_{\text{ball}} = (R^2-r^2)/2R$ as in Remark~\ref{rem:highd}, which is
non-negative, bounded, and smeared over a spacelike slice.  Quantum
energy inequalities for averaging along a timelike worldline hold in
every dimension \cite{FewsterEveson1998}.  Purely spatial ones do not:
Ford, Helfer and Roman \cite{FHR2002} exhibit states of a massless scalar
field carrying an arbitrarily large amount of negative energy in a given
region of space at a fixed time.  Their sampling is compactly supported,
where our $g$ has a constant tail $\pi R$ charging positive energy
anywhere at full weight, so their theorem does not by itself defeat the
tailed bound.  What it does establish is that the machinery the
two-dimensional proof rests on, a spatial inequality depending on the
weight alone, has no counterpart above two dimensions.  We expect the
stronger statement as well, since in their states the total energy grows
only logarithmically in the cutoff and at one higher power of the
amplitude than the sampled negative energy, so that the tail cannot
rescue the bound at small amplitude; we do not prove it here.  Their
field is also minimally rather than conformally coupled, and so is not
conformal in four dimensions.

The two-dimensional case escapes because its averaging is null rather
than spacelike.  A chiral coordinate is a null coordinate, so
\eqref{eq:QEI} is a null-averaged inequality, and averaging along a null
direction is a very different matter from averaging over a slice.  What
makes an interval work is a coincidence of two properties: its modular
weight is bounded, by \eqref{eq:wmax}, \emph{and} it is supported on a
null line.  Among the regions whose modular weights are known explicitly,
higher dimensions offer one property or the other but not both.  The ball
has a bounded weight, by Remark~\ref{rem:highd}, but a spacelike
smearing.  A wedge, or a null cut of a null plane, has a null smearing
but a weight growing linearly without bound, so that no inequality
$K \leq \lambda H + \mu$ can hold for it at any $\lambda$.  Nor would a
region combining both properties help: Fewster and Roman \cite{FR2003}
show that weighted averages of the null-contracted stress tensor along a
null geodesic are unbounded below in four dimensions, in contrast to two,
so above two dimensions the null direction supports no inequality either.
Two dimensions is thus special twice over, and it is why the constant
$c/6$ exists here with no evident counterpart elsewhere.

The several-interval case is settled by
Proposition~\ref{prop:multi}, and the answer supports reading $c/6$ as a
boundary quantity: the constant is $c/12$ per entangling point, and it is
the count of those points, not the geometry, that changes.  It is worth
recording that $c/12$ per point is also the coefficient of the chiral
entanglement entropy $\frac{c}{6}\log(R/\epsilon)$ under the same
decomposition.  We do not know whether that agreement is structural.  A
test would be a region whose entangling points are not those of a
disjoint union of intervals, for which the entropy coefficient and the
truncation count could differ.

\section*{Declarations}

\noindent
\textbf{Funding.}  No funding was received for this work.

\medskip
\noindent
\textbf{Competing interests.}  The authors declare no competing
interests.

\medskip
\noindent
\textbf{Data availability.}  No datasets were generated or analysed, and no
numerical computation underlies any result reported here.

\end{document}